\newtheorem{Thm}{Theorem}
\newtheorem{Cor}{Corollary}
\newtheorem{Con}{Conjecture}
\newcommand{\bra}[1]{{\left\langle #1 \right|}}
\newcommand{\ket}[1]{{\left| #1 \right\rangle}}
\newcommand{\C}{\mbox{$\mathbb C$}}
\newcommand{\T}{\mbox{$\mathrm{tr}$}}
\begin{document}
\title{Monogamy and polygamy for multi-qubit entanglement using R\'enyi entropy}

\author{Jeong San Kim and Barry C. Sanders }

\address{
 Institute for Quantum Information Science,
 University of Calgary, Alberta T2N 1N4, Canada
} \eads {\mailto{jekim@ucalgary.ca}, \mailto{sandersb@ucalgary.ca} }

\date{\today}

\begin{abstract}
Using R\'enyi-$\alpha$ entropy to quantify bipartite entanglement, 
we prove monogamy of entanglement in multi-qubit systems for $\alpha
\geq 2$.
We also conjecture a polygamy inequality of multi-qubit entanglement 
with strong numerical evidence for
$0.83-\epsilon \leq \alpha \leq 1.43+\epsilon$ with $0<\epsilon<0.01$.
\end{abstract}

\pacs{
03.67.-a, 
03.65.Ud 
}
\maketitle

\section{Introduction}
\label{Sec: intro} 
One distinct property of quantum entanglement
from other classical correlations is its restricted sharability. For
instance, if a pair of parties in a multipartite quantum system
share maximal entanglement, then they can share neither
entanglement~\cite{ckw,ov} nor classical correlations~\cite{kw} with
the rest. This is known as the {\em Monogamy of
Entanglement}~(MoE)~\cite{T04}, and it has been shown that this
restricted sharability of quantum entanglement can be used as a
resource to distribute a secret key which is secure against
unauthorized parties~\cite{rg,m}.

Whereas MoE is a restricted property of entanglement in multipartite quantum systems,
the sharability itself is about the bipartite entanglements among the parties in multipartite systems.
In other words, it is inevitable to have a proper way of quantifying bipartite
entanglement for a good description of the monogamy nature in multipartite quantum systems.
For this reason, certain criteria of bipartite entanglement measure were recently proposed 
for a good description of the monogamy nature of entanglement in multipartite quantum systems~\cite{kds};
that is,

\begin{itemize}
\item[(i)] {\em Monotonicity}: the property that ensures
entanglement cannot be increased under local operations and classical communications.

\item[(ii)] {\em Separability}: capability
of distinguishing entanglement from separability.
\end{itemize}

\begin{itemize}
\item[(iii)] {\em Monogamy}: upper bound on a sum of bipartite entanglement
measures thereby showing that bipartite sharing of entanglement is bounded.
\end{itemize}

The first mathematical characterization of MoE was shown in
three-qubit systems~\cite{ckw} using {\em concurrence}~\cite{ww}
as the bipartite entanglement measure.
It is known as {\em CKW-inequality} named after its
establishers, Coffman, Kundu and Wootters, and 
this CKW-type inequality was also shown for arbitrary multi-qubit
systems later~\cite{ov}. In other words, concurrence is a good entanglement measure
for multi-qubit-systems that satisfies the criteria proposed in~\cite{kds}.  

However, monogamy inequality using concurrence is know to fail in
its generalization for higher-dimensional quantum systems~\cite{ou,
kds}. Furthermore, although MoE in multi-qubit systems is
mathematically well-characterized in terms of concurrence, it is not
generally true for other entanglement measures such as {\em
Entanglement of Formation}~(EoF)~\cite{bdsw}. 
In other words, MoE does
not have CKW-type characterization in terms of EoF, and this exposes
the importance of the choice of a bipartite entanglement measure to
characterize MoE even in multi-qubit systems.
Moreover, for possible generalization
of monogamy inequality into higher-dimensional quantum systems, it
is undoubtedly one of the most important and necessary tasks to have
a proper way of quantifying bipartite entanglement.

R\'enyi-$\alpha$ entropy~\cite{Renyi} is a generalization of Shannon
entropy~\cite{Shannon} in terms of the non-negative real parameter
$\alpha$, and it has been widely used in the study of quantum
information theory such as quantum entanglement and
correlations~\cite{bcehas, T2, lnp}. As a way to quantify the
uncertainty of probability distribution, R\'enyi-$\alpha$ entropy
shows its selective characters with respect to $\alpha$: The higher
values of $\alpha$, R\'enyi-$\alpha$ entropy is increasingly
determined by consideration of only the highest probability events,
whereas it increasingly weights all possible events more equally,
regardless of their probabilities for lower values of $\alpha$. For
the case when $\alpha$ tends to 1, R\'enyi-$\alpha$ entropy
converges to Shannon entropy.

Here, we show that the distinct character of R\'enyi-$\alpha$
entropy with respect to the parameter $\alpha$ also appears in
establishing monogamy inequalities of entanglement in multipartite
quantum systems. Although EoF that is based on Shannon entropy
is known to fail for usual CKW-type characterization of MoE,
R\'enyi-$\alpha$ entropy can still be shown to have CKW-type
monogamy inequality for all case of $\alpha$ if it exceeds a certain
threshold. We first provide an analytic formula for the bipartite entanglement 
measure based on R\'enyi-$\alpha$ entropy
namely {\em R\'enyi-$\alpha$ entanglement} in two-qubit systems for $\alpha\geq1$,
 and we show that multi-qubit entanglement shows the usual
CKW-type monogamy inequalities in terms of R\'enyi-$\alpha$
entanglement for $\alpha \geq2$. For $\alpha
<2$, we conjecture with strong numerical evidence that
R\'enyi-$\alpha$ entropy can provide a possible dual monogamy or
{\em polygamy} inequality of multi-qubit entanglement for
$0.83-\epsilon \leq \alpha \leq 1.43+\epsilon$ with $0<\epsilon<0.01$

This paper is organized as follows. In Section~\ref{Subsec: def}, we
recall the definition of R\'enyi-$\alpha$ entropy, and
R\'enyi-$\alpha$ entanglement for bipartite quantum states. In
Section~\ref{Subsec: 2formula}, we provide an analytic formula of
R\'enyi-$\alpha$ entanglement for arbitrary two-qubit states. In
Section~\ref{Subsec: mono}, we derive a monogamy inequality of
multi-qubit entanglement in terms of R\'enyi-$\alpha$ entanglement
for $\alpha \geq 2$, and we conjecture a polygamy inequality of
multi-qubit entanglement using R\'enyi-$\alpha$ entropy with strong
numerical evidences in Section~\ref{Subsec: poly}. Finally, we
summarize our results in Section~\ref{Conclusion}.


\section{R\'enyi-$\alpha$ Entropy and Entanglement Measures}
\label{Sec: Measure}
\subsection{Definition}
\label{Subsec: def} For a probability distribution $P=\{p_i\}$ where
$0\leq p_i\leq1$ for all $i$ and $\sum_{i}p_{i}=1$, its classical
R\'enyi-$\alpha$ entropy is defined as
\begin{equation}
H_{\alpha}(P)=\frac{1}{1-\alpha}\log\sum_{i}p_i^{\alpha},
\label{cr-entropy}
\end{equation}
for any positive $\alpha$ such that $\alpha \neq 1$. Throughout this
paper, the logarithmic function is assumed to have base two
otherwise specified. In the limiting case where $\alpha$  tends to 1,
$H_{\alpha}(P)$ converges to Shannon
entropy, that is,
\begin{equation}
\lim_{\alpha \rightarrow1}H_{\alpha}(P)=-\sum_{i}p_i\log p_i
=H(P).
\label{renyi_1}
\end{equation}

For any quantum state $\rho$, its quantum R\'enyi-$\alpha$ entropy
is defined as
\begin{equation}
S_{\alpha}(\rho)=\frac{1}{1-\alpha}\log \T \rho^{\alpha},
\label{qr-entropy}
\end{equation}
for any $\alpha >0$ and $\alpha \neq 1$~\cite{horo}.
For the quantum state $\rho$ with its spectral
decomposition $\rho=\sum_{i}\lambda_i\ket{\psi_i}\bra{\psi_i}$, we
have
\begin{equation}
S_{\alpha}(\rho)=H_{\alpha}(X),
\end{equation}
where $X=\{\lambda_i\}$ is the spectrum of $\rho$, and thus,
$S_{\alpha}(\rho)$ converges to the von Neumann entropy of $\rho$
for the case when $\alpha \rightarrow 1$.
In other words, Shannon entropy and von Neumann entropy are the
singular points of classical and quantum R\'enyi entropies
respectively, and those singularities are removable. For this reason, we
will just consider $H_1\left(P\right)=H\left(P\right)$ and
$S_1(\rho)=S(\rho)$ for any probability distribution $P$ and quantum
state $\rho$.

For a bipartite pure state $\ket{\psi}_{AB}$, the von Neumann
entropy of the reduced density matrix $\rho_A=\T_B
\ket{\psi}_{AB}\bra{\psi}$ is known to be a good bipartite
entanglement measure
\begin{equation}
E(\ket{\psi}_{AB})=S(\rho_A)=S(\rho_B).
\label{EoF pure}
\end{equation}
With noticing that von Neumann entropy quantifies the uncertainty of
the quantum state, this way of quantifying bipartite entanglement is
based on the uncertainty of subsystem: More uncertainty on
subsystems implies stronger quantum correlation between subsystems.

A well-known way to generalize this concept of entanglement measure into
mixed states is taking the minimum (or infimum) of the average
entanglements
\begin{equation}
E_{\rm f}(\rho_{AB})=\min \sum_i p_i E(\ket{\psi}_{AB}) \label{EoF}
\end{equation}
over all possible pure state decompositions of the mixed state
$\rho_{AB}=\sum_{i}p_i \ket{\psi_i}_{AB}\bra{\psi_i}$. This
generalization is known as {\em convex-roof extension}, and
$E_{\rm f}(\rho_{AB})$ is called the entanglement of formation of
$\rho_{AB}$.

As a generalization of EoF into the full spectrum of R\'enyi-$\alpha$ entropy~\cite{v},
{\em R\'enyi-$\alpha$ entanglement} of a bipartite pure state $\ket{\psi}_{AB}$
is defined as
\begin{equation}
E_{\alpha}\left(\ket{\psi}_{AB} \right)=S_{\alpha}(\rho_A),
\label{Eq: Renyi measure pure}
\end{equation}
where $\rho_A=\T_B \ket{\psi}_{AB}\bra{\psi}$,
and for a mixed state $\rho_{AB}$, its R\'enyi-$\alpha$ entanglement
is defined as,
\begin{equation}
E_{\alpha}\left(\rho_{AB} \right)=\min \sum_i p_i E_{\alpha}(\ket{\psi_i}_{AB}),
\label{Eq: Renyi measure pure}
\end{equation}
where the minimum is taken over all possible pure state
decompositions of $\rho_{AB}=\sum_{i}p_i
\ket{\psi_i}_{AB}\bra{\psi_i}$. 
Similar to EoF for bipartite quantum states, R\'enyi-$\alpha$ entanglement
is defined based on the uncertainty of subsystems, which has EoF as a special 
case when $\alpha\rightarrow1$.

It is direct to check that $E_{\alpha}\left(\rho_{AB} \right)=0$ if and only if
$\rho_{AB}$ is a separable state, and furthermore, 
R\'enyi-$\alpha$ entanglement is also known as entanglement monotone:
it is not increased under local operations and classical communications. 

\subsection{Analytical formula for two-qubit systems}
\label{Subsec: 2formula}

Let us recall the definition of concurrence. For any bipartite pure
state $\ket \phi_{AB}$, its concurrence,
$\mathcal{C}(\ket\phi_{AB})$ is defined as~\cite{ww}
\begin{equation}
\mathcal{C}(\ket \phi_{AB})=\sqrt{2(1-\T\rho^2_A)},
\label{pure state concurrence}
\end{equation}
where $\rho_A=\T_B(\ket \phi_{AB}\bra \phi)$,
and for any mixed state $\rho_{AB}$, its concurrence is defined as
\begin{equation}
\mathcal{C}(\rho_{AB})=\min \sum_k p_k \mathcal{C}({\ket {\phi_k}}_{AB}),
\label{mixed state concurrence}
\end{equation}
where the minimum is taken over all possible pure state
decompositions, $\rho_{AB}=\sum_kp_k{\ket {\phi_k}}_{AB}\bra
{\phi_k}$.

For any two-qubit mixed state $\rho_{AB}$ in $\mathcal B
\left(\C^{2}\otimes \C^{2}\right)$, its concurrence is known to have
an analytic formula~\cite{ww}, that is,
\begin{equation}
\mathcal{C}(\rho_{AB})=\max\{0, \lambda_1-\lambda_2-\lambda_3-\lambda_4\},
\label{C_formula}
\end{equation}
where $\lambda_i$'s are the eigenvalues, in decreasing order, of
$\sqrt{\sqrt{\rho_{AB}}\tilde{\rho}_{AB}\sqrt{\rho_{AB}}}$ and
$\tilde{\rho}_{AB}=\sigma_y \otimes\sigma_y
\rho^*_{AB}\sigma_y\otimes\sigma_y$ with the Pauli operator
$\sigma_y$. Furthermore, the relation between concurrence
and EoF of a two-qubit mixed state $\rho_{AB}$ (or a pure state
$\ket{\psi}_{AB} \in \mathbb{C}^2 \otimes \mathbb{C}^{d}$),
can be given as a monotone
increasing, convex function $\mathcal{E}$~\cite{ww}, such that
\begin{equation}
 E_{\rm f} (\rho_{AB}) = {\mathcal E}(\mathcal{C}_{AB}),
\end{equation}
where
\begin{equation} {\mathcal E}(x) = H\Bigl({1\over 2} + {1\over
2}\sqrt{1-x^2}\Bigr), \hspace{0.5cm}\mbox{for } 0 \le x \le 1,
\label{eps}
\end{equation}
with the binary entropy function $H(t) = -t\log t -
(1-t)\log (1-t)$. In other words, the analytic formula of
concurrence as well as its functional relation with EoF lead us to
an analytic formula of EoF for two-qubit states.

For any two-qubit pure state (or any pure state with Schmidt-rank
less than or equal to two) with its Schmidt decomposition
$\ket{\psi}_{AB}=\sqrt{\lambda_{0}}\ket{00}_{AB}+\sqrt{\lambda_{1}}\ket{11}_{AB}$,
its R\'enyi-$\alpha$ entanglement is
\begin{eqnarray}
E_{\alpha}\left(\ket{\psi}_{AB} \right)&=&S_{\alpha}(\rho_A)\nonumber\\
&=&\frac{1}{1-\alpha}\log\left(\lambda_0^{\alpha}+\lambda_1^{\alpha} \right).
\label{RM pure}
\end{eqnarray}

Now, for each $\alpha > 0$, by defining an analytic function
\begin{equation}
f_{\alpha}(x):=\frac{1}{1-\alpha}\log\left[\left(\frac{1-\sqrt{1-x^2}}{2}\right)^{\alpha}
+\left(\frac{1+\sqrt{1-x^2}}{2}\right)^{\alpha}\right]
\label{f_alpha}
\end{equation}
on $0 \leq x \leq 1$, it can be directly checked that
\begin{equation}
E_{\alpha}\left(\ket{\psi}_{AB} \right)=f_{\alpha}\left(\mathcal{C}(\ket \psi_{AB}) \right),
\label{relation_pure}
\end{equation}
where $\mathcal{C}(\ket \psi_{AB})$ is the concurrence of $\ket{\psi}_{AB}$.
Thus, for each $\alpha >0$, we have a
functional relation between the concurrence and R\'enyi-$\alpha$
entanglement for two-qubit pure state $\ket{\psi}_{AB}$. Here, we note that for the case when
$\alpha$ tends to 1, $f_{\alpha}(x)$ converges to the function ${\mathcal E}(x)$
in Eq.~(\ref{eps}); that is,
\begin{equation}
\lim_{\alpha -> 1}f_{\alpha}(x) = \mathcal{E}(x).
\label{eps2}
\end{equation}

For a mixed state $\rho_{AB}$, we have the following theorem.

\begin{Thm}
For each $\alpha >0$, if
\begin{equation}
f_{\alpha}(x)=\frac{1}{1-\alpha}\log\left[\left(\frac{1-\sqrt{1-x^2}}{2}\right)^{\alpha}
+\left(\frac{1+\sqrt{1-x^2}}{2}\right)^{\alpha}\right]
\end{equation}
is a monotonically increasing and convex function, then
\begin{equation}
E_{\alpha}\left(\rho_{AB} \right)=f_{\alpha}\left(\mathcal{C}(\rho_{AB}) \right),
\label{2formula}
\end{equation}
for any two-qubit state  $\rho_{AB}$ where $\mathcal{C}(\rho_{AB})$
is the concurrence and $E_{\alpha}\left(\rho_{AB} \right)$ is the
R\'enyi-$\alpha$ entanglement of $\rho_{AB}$. \label{Thm: 2-analytic}
\end{Thm}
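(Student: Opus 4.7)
The plan is to prove the identity $E_\alpha(\rho_{AB})=f_\alpha(\mathcal{C}(\rho_{AB}))$ by establishing the two opposite inequalities separately, exploiting both the convex-roof definition of $E_\alpha$ and Wootters' structure theorem for optimal concurrence decompositions of two-qubit states. On pure states the relation $E_\alpha(\ket{\psi}_{AB})=f_\alpha(\mathcal{C}(\ket{\psi}_{AB}))$ is already established in Eq.~(\ref{relation_pure}), so the whole job is to pass this pure-state identity through the convex roof.

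For the lower bound $E_\alpha(\rho_{AB})\ge f_\alpha(\mathcal{C}(\rho_{AB}))$, I would take an arbitrary pure-state ensemble $\rho_{AB}=\sum_i p_i\ket{\psi_i}_{AB}\bra{\psi_i}$ and write
\begin{equation}
\sum_i p_i E_\alpha(\ket{\psi_i}_{AB}) = \sum_i p_i f_\alpha\bigl(\mathcal{C}(\ket{\psi_i}_{AB})\bigr) \ge f_\alpha\Bigl(\sum_i p_i \mathcal{C}(\ket{\psi_i}_{AB})\Bigr) \ge f_\alpha\bigl(\mathcal{C}(\rho_{AB})\bigr),
\end{equation}
where the first inequality is Jensen's inequality applied to the convex function $f_\alpha$, and the second uses monotonicity of $f_\alpha$ together with the fact that $\sum_i p_i\mathcal{C}(\ket{\psi_i}_{AB})\ge\mathcal{C}(\rho_{AB})$ by the convex-roof definition~(\ref{mixed state concurrence}) of concurrence. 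Minimising the left-hand side over all decompositions gives the desired lower bound.

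For the matching upper bound, I would invoke Wootters' construction: for any two-qubit state $\rho_{AB}$ there exists an optimal pure-state decomposition $\rho_{AB}=\sum_i p_i\ket{\psi_i}_{AB}\bra{\psi_i}$ such that every component state satisfies $\mathcal{C}(\ket{\psi_i}_{AB})=\mathcal{C}(\rho_{AB})$, i.e.\ an equal-concurrence decomposition attaining the minimum in~(\ref{mixed state concurrence}). For this particular decomposition,
\begin{equation}
E_\alpha(\rho_{AB}) \le \sum_i p_i E_\alpha(\ket{\psi_i}_{AB}) = \sum_i p_i f_\alpha\bigl(\mathcal{C}(\ket{\psi_i}_{AB})\bigr) = f_\alpha\bigl(\mathcal{C}(\rho_{AB})\bigr),
\end{equation}
since $f_\alpha$ is evaluated at the same argument for every term. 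Combining the two bounds yields the equality~(\ref{2formula}).

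The only delicate ingredient is the existence of the equal-concurrence optimal decomposition, which is not elementary; I would simply cite Wootters' result, since the argument is exactly the one that produced the original analytic formula for $E_{\rm f}$ in two-qubit systems. Everything else is a direct consequence of the hypotheses (monotonicity and convexity of $f_\alpha$) together with the pure-state identity~(\ref{relation_pure}), so in essence this theorem is a general convex-roof transfer principle: any bipartite entanglement measure that on pure two-qubit states is a monotonically increasing convex function of concurrence automatically inherits Wootters' analytic formula on mixed states.
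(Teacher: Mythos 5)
Your proof is correct and follows essentially the same route as the paper: the lower bound via Jensen's inequality and monotonicity of $f_\alpha$ applied to the convex-roof, and the upper bound via Wootters' equal-concurrence optimal decomposition. The only cosmetic difference is that you bound an arbitrary ensemble and then minimise, whereas the paper works directly with the optimal ensemble for $E_\alpha$; the content is identical.
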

\begin{proof}
Suppose $\rho_{AB}=\sum_i p_i\ket{\psi_i}_{AB}\bra{\psi_i}$ is an
optimal decomposition for $E_{\alpha}\left(\rho_{AB} \right)$ such
that $E_{\alpha}\left(\rho_{AB} \right)=\sum_i
p_iE_{\alpha}\left(\ket{\psi_i}_{AB}\right)$, then
\begin{eqnarray}
E_{\alpha}\left(\rho_{AB} \right)&=&\sum_i p_iE_{\alpha}\left(\ket{\psi_i}_{AB}\right)\nonumber\\
&=&\sum_i p_i f_{\alpha}\left( \mathcal{C}\left(\ket{\psi_i}_{AB}\right)\right)\nonumber\\
&\geq& f_{\alpha}\left(\sum_i p_i  \mathcal{C}\left(\ket{\psi_i}_{AB}\right)\right)\nonumber\\
&\geq& f_{\alpha}\left( \mathcal{C}\left(\rho_{AB}\right)\right)
\label{lineq}
\end{eqnarray}
where the second equality is by the relation between concurrence
and R\'enyi-$\alpha$ entanglement for pure states in
Eq.~(\ref{relation_pure}), the first inequality is by the convexity
of $f_{\alpha}$, and the last inequality is by the monotonicity of
$f_{\alpha}$ and the definition of
$\mathcal{C}\left(\rho_{AB}\right)$.

Due to the analytic formula of concurrence for two-qubit
states~\cite{ww}, we can always assume the existence of an optimal
decomposition $\rho_{AB}=\sum_i p_i\ket{\psi_i}_{AB}\bra{\psi_i}$
such that
\begin{equation}
\mathcal{C}\left(\rho_{AB}\right)=\sum_i p_i \mathcal{C}\left(\ket{\psi_i}_{AB}\right)\nonumber\\
\end{equation}
and
\begin{equation}
\mathcal{C}\left(\ket{\psi_i}_{AB}\right)=\mathcal{C}\left(\rho_{AB}\right)
\label{C_equal}
\end{equation}
for all $i$. Now, we have
\begin{eqnarray}
f_{\alpha}\left( \mathcal{C}\left(\rho_{AB}\right)\right)&=&
f_{\alpha}\left( \sum_i p_i  \mathcal{C}\left(\ket{\psi_i}_{AB}\right)\right)\nonumber\\
&=&\sum_i p_i f_{\alpha}\left(\mathcal{C}\left(\ket{\psi_i}_{AB}\right)\right)\nonumber\\
&=&\sum_i p_i E_{\alpha}\left(\ket{\psi_i}_{AB}\right)\nonumber\\
&\geq& E_{\alpha}\left(\rho_{AB}\right)
\label{rineq}
\end{eqnarray}
where the second equality is by Eq.~(\ref{C_equal}) and the
inequality is by the definition of
$E_{\alpha}\left(\rho_{AB}\right)$.

Thus, by Eqs.~(\ref{lineq}) and (\ref{rineq}), we have
\begin{equation}
E_{\alpha}\left(\rho_{AB} \right)=f_{\alpha}\left(\mathcal{C}(\rho_{AB}) \right),
\end{equation}
which completes the proof.
\end{proof}

Theorem~\ref{Thm: 2-analytic} implies that for any positive $\alpha$
such that $f_{\alpha}(x)$ in Eq.~(\ref{f_alpha}) is monotonically
increasing and convex, the analytic formula of concurrence for a
two-qubit state $\rho_{AB}$ in Eq.~(\ref{C_formula}) together with
the functional relation between $\mathcal{C}\left(\rho_{AB}\right)$
and $E_{\alpha}\left(\rho_{AB}\right)$ in Eq.~(\ref{2formula})
provide us an analytic formula for R\'enyi-$\alpha$ entanglement of
$\rho_{AB}$. The rest of this section is mainly contributed to the
analytic proof for the range of $\alpha$ where $f_{\alpha}$ is
monotonically increasing and convex.

\begin{Thm}
For any real $\alpha\geq 1$,
\begin{equation}
f_{\alpha}(x)=\frac{1}{1-\alpha}\log\left[\left(\frac{1-\sqrt{1-x^2}}{2}\right)^{\alpha}
+\left(\frac{1+\sqrt{1-x^2}}{2}\right)^{\alpha}\right]
\end{equation}
is a monotonically increasing and convex function for $0\leq x \leq
1$. Furthermore, the monotonicity and convexity of $f_{\alpha}(x)$
are strict in the sense that $f_{\alpha}(x)<f_{\alpha}(x')$ and
$f_{\alpha}(\lambda x +(1-\lambda)y) < \lambda
f_{\alpha}(x)+(1-\lambda)f_{\alpha}(y)$ for any $0<\lambda<1$ and
$0\leq x, x', y \leq 1$ such that $x < x'$. \label{Thm: int_alpha}
\end{Thm}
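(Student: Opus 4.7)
The plan is a direct differential-calculus attack. Substitute $t=\sqrt{1-x^2}$ and put $p=(1-t)/2$, $q=(1+t)/2$, so that $p+q=1$, $pq=x^2/4$, $q-p=t$, and $f_\alpha(x)=\frac{1}{1-\alpha}\log\phi(x)$ with $\phi(x)=p(x)^\alpha+q(x)^\alpha$. Elementary chain-rule computation gives $p'(x)=x/(2t)$, $p''(x)=1/(2t^3)$, $q'=-p'$, $q''=-p''$, and
\begin{equation*}
\phi'(x)=\alpha(p^{\alpha-1}-q^{\alpha-1})p'(x),\qquad \phi''(x)=\alpha(\alpha-1)(p^{\alpha-2}+q^{\alpha-2})p'(x)^2+\alpha(p^{\alpha-1}-q^{\alpha-1})p''(x),
\end{equation*}
so every derivative of $f_\alpha$ is expressible through $\phi$ and its derivatives.

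For monotonicity, since $p\le q$ and $\alpha\ge1$ give $p^{\alpha-1}\le q^{\alpha-1}$, one has $\phi'(x)\le 0$; together with $(1-\alpha)\le 0$ and $\phi>0$ this yields $f_\alpha'(x)=\phi'(x)/[(1-\alpha)\phi(x)\ln 2]\ge 0$. Strictness on $(0,1)$ for $\alpha>1$ is immediate from $p<q$ and all the inequalities used being strict. The boundary case $\alpha=1$ is recovered by taking $\alpha\to 1^+$ in the strict inequality, using the continuity identity~(\ref{eps2}) and the classical monotonicity of $\mathcal{E}$ due to Wootters~\cite{ww}.

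For convexity, the identity
\begin{equation*}
f_\alpha''(x)=\frac{\phi(x)\phi''(x)-\phi'(x)^2}{(1-\alpha)\phi(x)^2\ln 2}
\end{equation*}
combined with $(1-\alpha)<0$ shows that $f_\alpha''(x)\ge 0$ for $\alpha>1$ is equivalent to the \emph{log-concavity} of $\phi$ in $x$, namely $\phi\phi''\le(\phi')^2$. Substituting the derivatives, using $(p')^2=pq/(q-p)^2$ and $p''=1/[2(q-p)^3]$, and invoking the symmetric-function identities $(p^{\alpha-2}+q^{\alpha-2})(p^\alpha+q^\alpha)=(p^{\alpha-1}+q^{\alpha-1})^2+(pq)^{\alpha-2}(q-p)^2$ and $(p^{\alpha-1}-q^{\alpha-1})^2=(p^{\alpha-1}+q^{\alpha-1})^2-4(pq)^{\alpha-1}$, the log-concavity condition reduces, after multiplying through by $2(q-p)^3/\alpha>0$, to the polynomial-type inequality
\begin{equation*}
-2pq(q-p)(p^{\alpha-1}+q^{\alpha-1})^2+2(\alpha-1)(pq)^{\alpha-1}(q-p)^3+8\alpha(pq)^\alpha(q-p)+(p^{\alpha-1}-q^{\alpha-1})(p^\alpha+q^\alpha)\le 0.
\end{equation*}

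The principal obstacle is verifying this last inequality in closed form for all $\alpha\ge 1$. A useful sanity check is $\alpha=2$: factoring $(q-p)$ out collapses everything to $8(pq)^2+2pq-1\le 0$ on $pq\in[0,1/4]$, which factors as $8(u-1/4)(u+1/2)\le 0$ and is trivially satisfied. For general $\alpha>1$ the plan is to extract a common $(q-p)$ factor using the mean-value representation $q^\beta-p^\beta=\beta(q-p)\int_0^1[p+s(q-p)]^{\beta-1}ds$ (applied with $\beta=2\alpha-1$ to the last term), and then to control the residual symmetric expression in $p,q$ by Jensen's inequality applied to $t\mapsto t^{2\alpha-2}$; the sign of the second derivative of this map dictates a split into cases $1<\alpha\le 3/2$ and $\alpha\ge 3/2$. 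Strictness follows because each estimate is strict on $(0,1)$, and the $\alpha=1$ case is again dispatched by taking $\alpha\to 1^+$ in the strict inequality together with~(\ref{eps2}) and Wootters' convexity of $\mathcal{E}$~\cite{ww}.
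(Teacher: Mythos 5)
Your monotonicity argument is complete and correct: in the variables $p=(1-\sqrt{1-x^2})/2$, $q=(1+\sqrt{1-x^2})/2$ the sign of $\phi'=\alpha(p^{\alpha-1}-q^{\alpha-1})p'$ is immediate from $p\le q$, and combining with the sign of $1-\alpha$ gives $f_\alpha'\ge 0$; this is essentially the paper's own first-derivative computation in different notation. The reduction of convexity to log-concavity of $\phi=p^\alpha+q^\alpha$ is also correct, and I have checked that your symmetric-function identities and the resulting inequality
\begin{equation*}
-2pq(q-p)(p^{\alpha-1}+q^{\alpha-1})^2+2(\alpha-1)(pq)^{\alpha-1}(q-p)^3+8\alpha(pq)^\alpha(q-p)+(p^{\alpha-1}-q^{\alpha-1})(p^\alpha+q^\alpha)\le 0
\end{equation*}
are algebraically right, as is your verification at $\alpha=2$.

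The genuine gap is that this inequality --- which \emph{is} the theorem for the convexity half, and the only nontrivial part of it --- is never proved for general $\alpha\ge 1$. What you offer is a ``plan'' (mean-value representation of $q^{2\alpha-1}-p^{2\alpha-1}$, then Jensen for $t\mapsto t^{2\alpha-2}$ with a case split at $\alpha=3/2$) that you do not execute, and it is not evident that it closes: after dividing out $(q-p)$ you are left with an integral term $(2\alpha-1)\int_0^1[p+s(q-p)]^{2\alpha-2}\,{\rm d}s$ competing against $2pq(p^{\alpha-1}+q^{\alpha-1})^2+(pq)^{\alpha-1}-8\alpha(pq)^\alpha-2(\alpha-1)(pq)^{\alpha-1}(q-p)^2$, and a single application of Jensen does not obviously dominate all of these simultaneously. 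For comparison, the paper avoids this impasse entirely by working with $g_\alpha(x)=-\log\bigl[(1-\sqrt{1-x^2})^{\alpha}+(1+\sqrt{1-x^2})^{\alpha}\bigr]$ and bounding the two factors $A_1^{\alpha-1}\pm A_2^{\alpha-1}$ from below via their binomial series (keeping only the leading terms), which reduces the positivity of $g_\alpha''$ to the elementary inequality $(\alpha-1)(1-x^{2\alpha-2})+(\alpha-1)^2x^2(1-x^2)\ge 0$. Either finish your inequality rigorously or adopt such a lower-bounding step; as written the convexity claim is asserted, not proved. A secondary, smaller issue: for $\alpha=1$ you obtain strictness ``by taking $\alpha\to1^+$ in the strict inequality,'' but strict inequalities are not preserved under limits, so at $\alpha=1$ you must appeal directly to the known strict monotonicity and convexity of $\mathcal{E}$ rather than to a limiting argument.
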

\begin{proof}
For the case when $\alpha$ tends to 1, $f_{\alpha}(x)$ converges to
$\mathcal{E}(x)$ in Eq.~(\ref{eps}) which is already known to be
monotonically increasing and convex~\cite{ww}. Here, we consider the
case where $\alpha>1$.

Let us define a function
\begin{equation}
g_{\alpha}(x):=-\log\left[\left(1-\sqrt{1-x^2}\right)^{\alpha}+\left(1+\sqrt{1-x^2}\right)^{\alpha}\right],
\label{g}
\end{equation}
then we have
\begin{equation}
f_{\alpha}(x)=\frac{1}{\alpha-1}g_{\alpha}(x)+\frac{\alpha}{\alpha-1},
\label{fg}
\end{equation}
which implies that the convexity and the monotonicity of
$f_{\alpha}(x)$ follow from those of $g_{\alpha}(x)$ for $\alpha>1$.
Furthermore, $g_{\alpha}(x)$ is an analytic function on $0\leq x
\leq 1$, thus the convexity and monotonicity of $g_{\alpha}(x)$ can
be assured by showing its first and second derivatives are
nonnegativity.

By taking the first derivative of $g_{\alpha}(x)$, we have
\begin{equation}
\frac{{\rm d}g_{\alpha}(x)}{{\rm d}x}=\frac{\alpha
x\left[\left(1+\sqrt{1-x^2}\right)^{\alpha-1}
-\left(1-\sqrt{1-x^2}\right)^{\alpha-1}\right]}
{\sqrt{1-x^2}\left[\left(1-\sqrt{1-x^2}\right)^{\alpha}
+\left(1+\sqrt{1-x^2}\right)^{\alpha}\right]}\geq 0 \label{1deri}
\end{equation}
for all $\alpha>1$ and the equality holds only at the boundary,
that is $x=0$ or $x=1$. In other words, $g_{\alpha}(x)$ is a
strictly monotone-increasing function for $0\leq x \leq 1$.

For the second derivative of $g_{\alpha}(x)$, we have
\begin{eqnarray}
\frac{{\rm d}^2g_{\alpha}(x)}{{\rm d}x^2}=B&&\frac{\left(A_1^{\alpha-1}-A_2^{\alpha-1}\right)
\left(A_1^{\alpha}+A_2^{\alpha}\right)}{\sqrt{1-x^2}}\nonumber\\
&&+B\left[x^2\left(A_1^{\alpha-1}-A_2^{\alpha-1}\right)^2-4(\alpha-1)x^{2\alpha-2}\right]
\label{2deri}
\end{eqnarray}
where $A_1=1+\sqrt{1-x^2}$, $A_2=1-\sqrt{1-x^2}$ and
$B=\alpha/\left[\left(1-x^2\right){\left(A_1^{\alpha}+A_2^{\alpha}
\right)}^2\right]$. The binomial series for $A_1^{\alpha-1}$ and
$A_2^{\alpha-1}$ lead us to
\begin{eqnarray}
A_1^{\alpha-1}&=&{\left( 1+ \sqrt{1-x^2}\right)}^{\alpha-1}\nonumber\\
&=&1+(\alpha-1)\sqrt{1-x^2}+
\frac{(\alpha-1)(\alpha-2)}{2!}
{\left(\sqrt{1-x^2}\right)}^2+\cdots
\label{binomial1}
\end{eqnarray}
and
\begin{eqnarray}
A_2^{\alpha-1}&=&{\left(1-\sqrt{1-x^2}\right)}^{\alpha-1}\nonumber\\
&=&1-(\alpha-1)\sqrt{1-x^2}+
\frac{(\alpha-1)(\alpha-2)}{2!}
{\left(\sqrt{1-x^2}\right)}^2-\cdots,
\label{binomial2}
\end{eqnarray}
and thus
\begin{eqnarray}
A_1^{\alpha-1}-A_2^{\alpha-1}=2(\alpha-1)\sqrt{1-x^2}+C_1 \geq 2(\alpha-1)\sqrt{1-x^2},\nonumber\\
A_1^{\alpha-1}+A_2^{\alpha-1}=2+C_2 \geq 2,
\label{lower}
\end{eqnarray}
for some non-negative $C_1$ and $C_2$.
Now, let
\begin{eqnarray}
\Gamma_1=\frac{\left(A_1^{\alpha-1}-A_2^{\alpha-1}\right)\left(A_1^{\alpha}+A_2^{\alpha}\right)}{\sqrt{1-x^2}},~
\Gamma_2=x^2\left(A_1^{\alpha-1}-A_2^{\alpha-1}\right)^2,
\end{eqnarray}
then by Eq.~(\ref{lower}), we have
\begin{equation}
\Gamma_1\geq 4(\alpha-1),~ \Gamma_2\geq 4{(\alpha-1)}^2x^2(1-x^2),
\end{equation}
and
\begin{eqnarray}
\frac{{\rm d}^2g_{\alpha}(x)}{{\rm d}x^2}&=&B\left[\Gamma_1+\Gamma_2-4(\alpha-1)x^{2\alpha-2}\right]\nonumber\\
&\geq&B\left[4(\alpha-1)-4(\alpha-1)x^{2\alpha-2}+4{(\alpha-1)}^2x^2(1-x^2)\right]\nonumber\\
&\geq&0,
\end{eqnarray}
for any $\alpha>1$ where the equality holds if and only if $x=1$.
Thus, the first and second derivatives of $g_{\alpha}(x)$ are
nonnegative for $0 \leq x \leq 1$ and strictly positive for $0<x<1$,
which implies the strict monotonicity and convexity of
$f_{\alpha}(x)$ on $0 \leq x \leq 1$ for $\alpha>1$.
\end{proof}

Now, we have the following corollary, which is the first main result of this paper.
\begin{Cor}
For any $\alpha\geq1$,
a two-qubit state $\rho_{AB}$ has an analytic formula for its
R\'enyi-$\alpha$ entanglement
such that $E_{\alpha}\left(\rho_{AB}
\right)=f_{\alpha}\left(\mathcal{C}(\rho_{AB}) \right)$ where
$\mathcal{C}(\rho_{AB})$ is the concurrence of $\rho_{AB}$,
and
\begin{equation}
f_{\alpha}(x)=\frac{1}{1-\alpha}\log\left[\left(\frac{1-\sqrt{1-x^2}}{2}\right)^{\alpha}
+\left(\frac{1+\sqrt{1-x^2}}{2}\right)^{\alpha}\right].
\end{equation}
\label{Cor: real_alpha}
\end{Cor}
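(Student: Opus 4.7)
The plan is simply to assemble the two theorems that have just been proved. Corollary~\ref{Cor: real_alpha} makes a claim about all $\alpha\geq 1$, and the hypothesis of Theorem~\ref{Thm: 2-analytic}, namely that $f_\alpha$ is monotonically increasing and convex on $[0,1]$, is exactly the conclusion of Theorem~\ref{Thm: int_alpha}. So I would write one short paragraph that invokes Theorem~\ref{Thm: int_alpha} to discharge the hypothesis of Theorem~\ref{Thm: 2-analytic}, and then read off the formula.

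More concretely, I would split into two cases only to be careful about the removable singularity at $\alpha=1$. For $\alpha>1$, Theorem~\ref{Thm: int_alpha} gives the strict monotonicity and convexity of $f_\alpha$ on $[0,1]$, so Theorem~\ref{Thm: 2-analytic} immediately yields $E_\alpha(\rho_{AB})=f_\alpha(\mathcal{C}(\rho_{AB}))$ for every two-qubit state. For $\alpha=1$, equation~(\ref{eps2}) identifies $\lim_{\alpha\to 1}f_\alpha(x)=\mathcal{E}(x)$, and $\mathcal{E}$ is already known~\cite{ww} to be monotonically increasing and convex, giving the same conclusion (and recovering the familiar Wootters formula $E_{\rm f}(\rho_{AB})=\mathcal{E}(\mathcal{C}(\rho_{AB}))$).

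There is essentially no obstacle here; the only thing to watch is that Theorem~\ref{Thm: 2-analytic}'s proof uses an optimal concurrence decomposition whose existence is guaranteed by~\cite{ww} for two-qubit states, so the result truly applies to arbitrary two-qubit $\rho_{AB}$ rather than only to special subclasses. Once that is noted, the corollary follows in a single line.
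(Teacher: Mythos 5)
Your proposal matches the paper exactly: the corollary is stated there without a separate proof precisely because, as you observe, Theorem~\ref{Thm: int_alpha} supplies the monotonicity and convexity hypothesis of Theorem~\ref{Thm: 2-analytic} for all $\alpha\geq 1$ (with the $\alpha=1$ case handled as the removable singularity where $f_\alpha\to\mathcal{E}$). Your one-paragraph assembly, including the remark about the optimal concurrence decomposition from~\cite{ww}, is exactly the intended argument.
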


In fact, $\frac{{\rm d}g_{\alpha}(x)}{{\rm d}x}$ in Eq.~(\ref{1deri}) can be
easily checked to be nonpositive for $0< \alpha <1$, and together with Eq.~(\ref{fg}), we have
$\frac{{\rm d}f_{\alpha}(x)}{{\rm d}x}=\frac{1}{\alpha-1}\frac{{\rm d}g_{\alpha}(x)}{{\rm d}x}\geq0$
for $0< \alpha <1$. In other words, the function $f_{\alpha}(x)$ in
Eq.~(\ref{f_alpha}) is a monotone-increasing function on the domain
of $0 \leq x \leq 1$ for any positive $\alpha$.

For the convexity of $f_{\alpha}(x)$ with $0< \alpha <1$, we first
note that the continuity of $\frac{{\rm d}^2g_{\alpha}(x)}{{\rm d}x^2}$ in
Eq.~(\ref{2deri}) with respect to $\alpha$ assures the positivity of
$\frac{{\rm d}^2g_{\alpha}(x)}{{\rm d}x^2}$ for $\alpha$ slightly less than 1. However, for general
$\alpha$ between 0 and 1, we tried a numerical way of calculations
for the second derivative of $f_{\alpha}(x)$ which is illustrated in
Figure~\ref{convexgraph2}.

We have tested $\frac{{\rm d}^2g_{\alpha}(x)}{{\rm d}x^2}$ for various values of
$\alpha$ between 0 and 1 and observed that
$\frac{{\rm d}^2g_{\alpha}(x)}{{\rm d}x^2}$ is nonnegative for $\alpha \geq 0.83$
(Figure~\ref{convexgraph2} (a) and (b)), whereas its positivity is
violated for $\alpha=0.82$ with $x$ around 1
(Figure~\ref{convexgraph2} (c)).
In other words, $f_{\alpha}(x)$
is numerically observed to be still convex for $0.83-\epsilon < \alpha <1$ with some
positive $\epsilon$ such that $0\leq\epsilon<0.01$.
Thus, together
with Theorem~\ref{Thm: int_alpha}, we have the following conjecture.

\begin{figure}
\parbox{5cm}{
\begin{center}
\includegraphics[width=\linewidth]{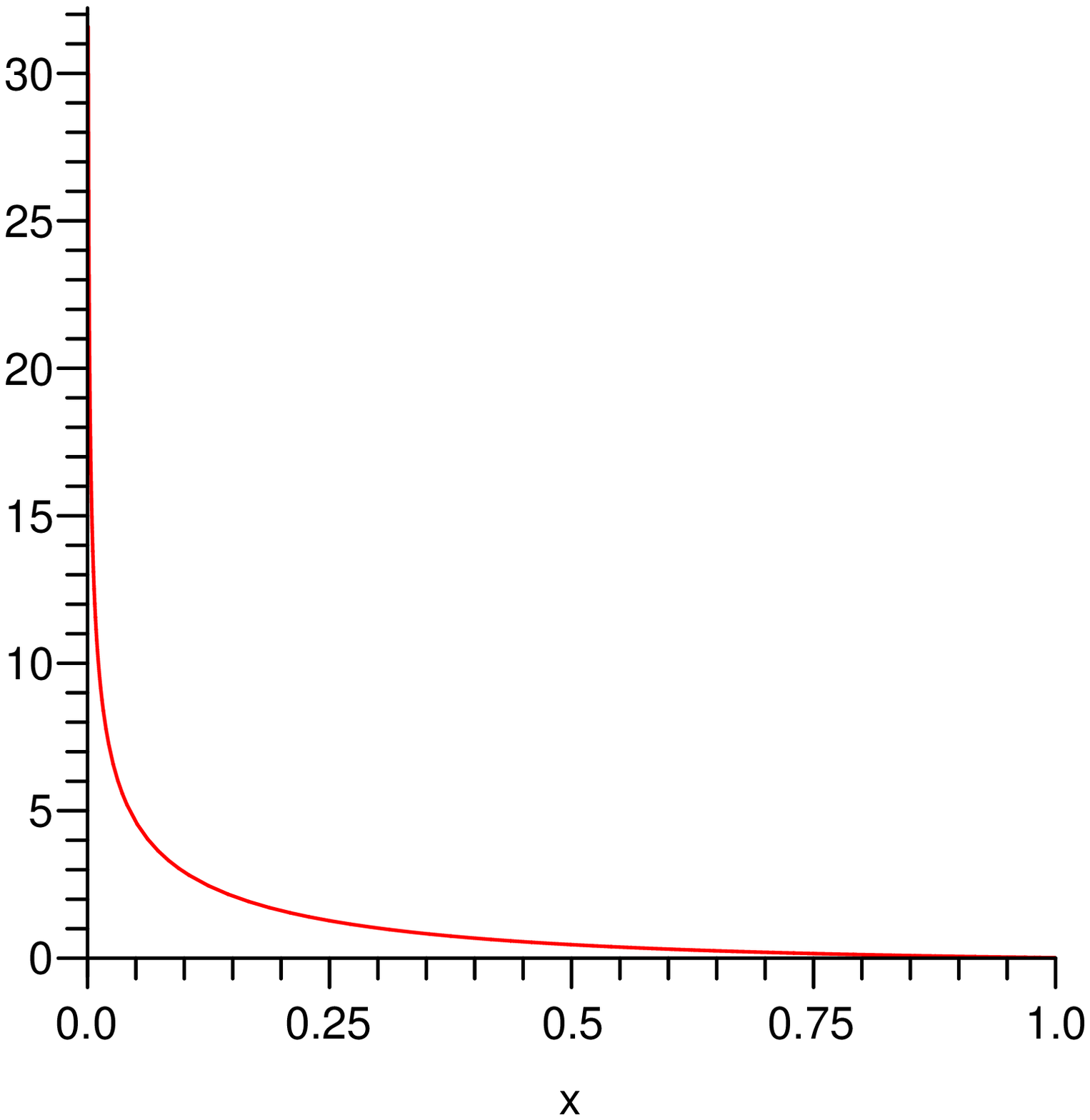}\\
(a)
\end{center}
}
\hfill
\parbox{5cm}{
\begin{center}
\includegraphics[width=\linewidth]{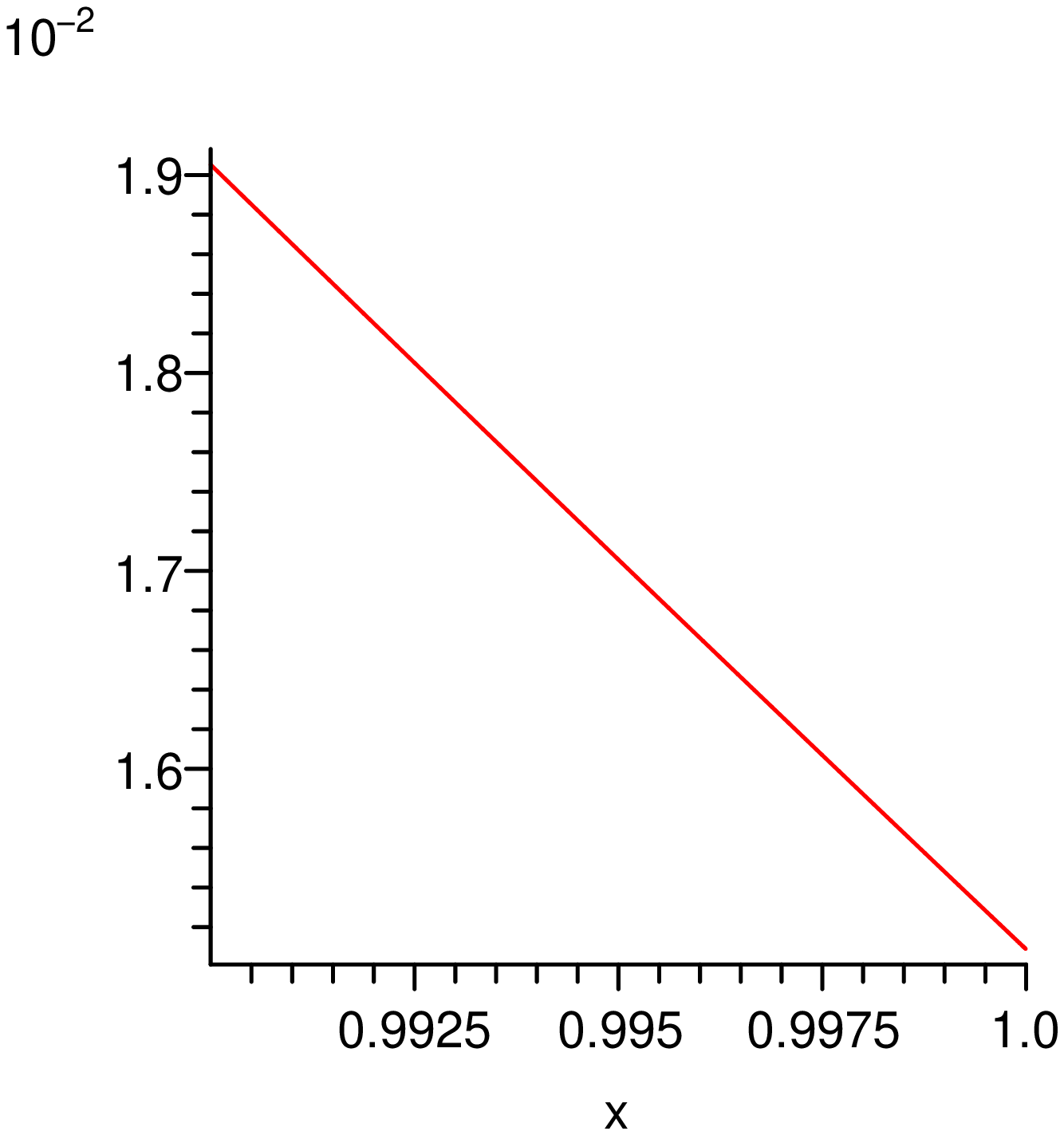}\\
(b)
\end{center}
}
\hfill
\parbox{5cm}{
\begin{center}
\includegraphics[width=\linewidth]{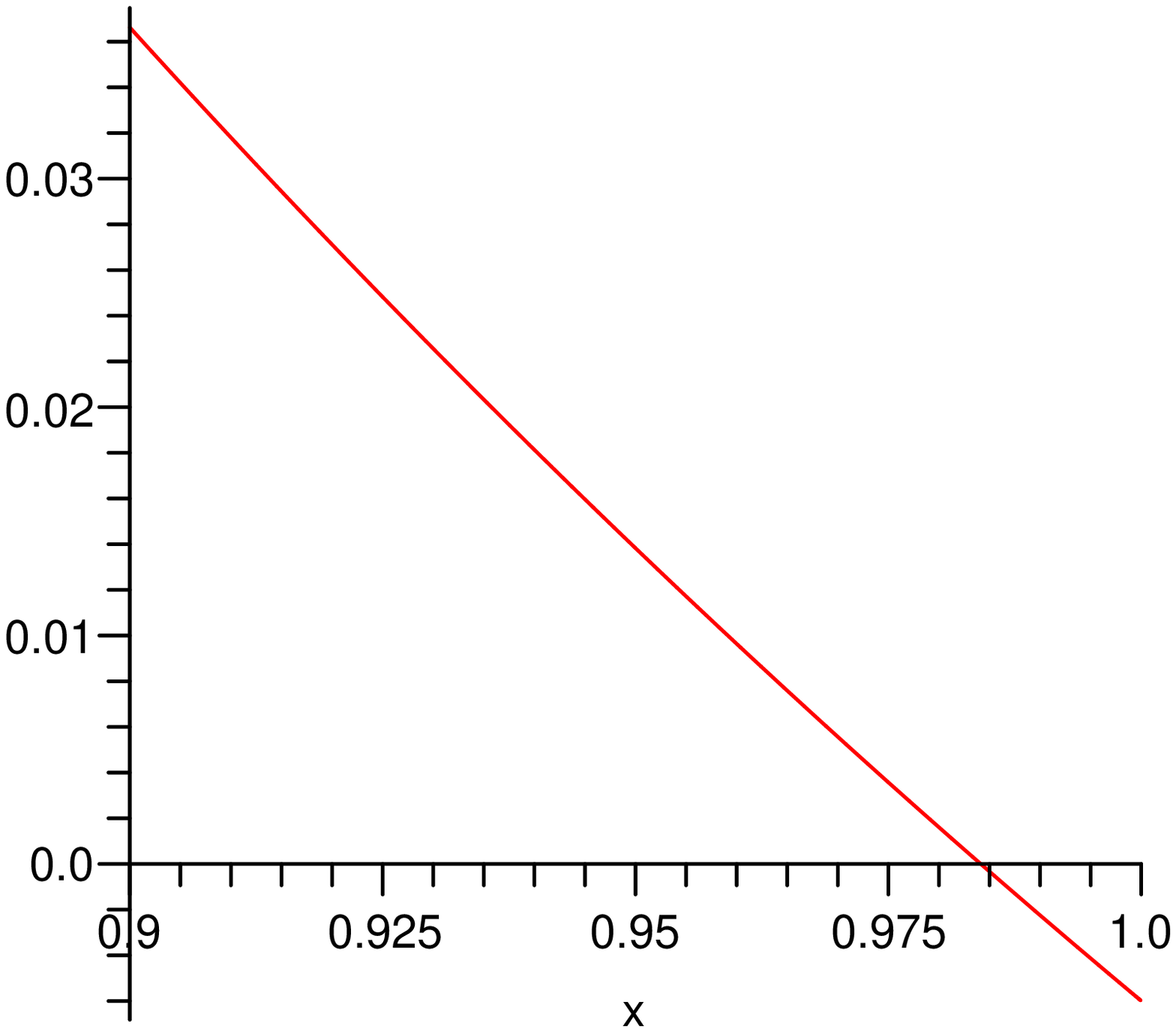}\\
(C)
\end{center}
}
\caption{$\frac{{\rm d}^2f_{\alpha}(x)}{{\rm d}x^2}$ for $\alpha=0.83$ are illustrated in (a) and (b)
on the domain $0 \leq x\leq1$ and $0.99 \leq x\leq1$ respectively.
The convexity of $f_{\alpha}(x)$ is violated for $\alpha=0.82$ in (c) where $\frac{{\rm d}^2f_{\alpha}(x)}{{\rm d}x^2}$ has
negative values for $x$ around 1. In Picture (b), the vertical axis is scaled by $10^{-2}$.
 }\label{convexgraph2}
\end{figure}

\begin{Con}
For any positive $\alpha>0.83-\epsilon$ with some positive
$\epsilon$ such that $0\leq\epsilon<0.01$,
\begin{equation}
f_{\alpha}(x)=\frac{1}{1-\alpha}\log\left[\left(\frac{1-\sqrt{1-x^2}}{2}\right)^{\alpha}
+\left(\frac{1+\sqrt{1-x^2}}{2}\right)^{\alpha}\right]
\end{equation}
is a convex function for $0\leq x \leq
1$. Furthermore, for this range of $\alpha$, any two-qubit state
$\rho_{AB}$ has an analytic formula for its R\'enyi-$\alpha$ entanglement
such that
$E_{\alpha}\left(\rho_{AB}
\right)=f_{\alpha}\left(\mathcal{C}(\rho_{AB}) \right)$ where
$\mathcal{C}(\rho_{AB})$ is the concurrence of $\rho_{AB}$.
\label{Con: alphaless1}
\end{Con}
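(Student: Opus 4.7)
The plan is to decouple the two claims in the conjecture. Once $f_\alpha$ is shown to be monotonically increasing and convex on $[0,1]$, the analytic formula $E_\alpha(\rho_{AB}) = f_\alpha(\mathcal{C}(\rho_{AB}))$ for two-qubit states follows immediately from Theorem~\ref{Thm: 2-analytic}. Monotonicity of $f_\alpha$ on $[0,1]$ was already verified in the discussion between Corollary~\ref{Cor: real_alpha} and the conjecture, using the nonpositivity of $g'_\alpha$ for $0<\alpha<1$ together with the sign flip induced by $1/(\alpha-1)$. Hence the entire proof reduces to establishing convexity of $f_\alpha$ on $[0,1]$ for $\alpha$ in the stated range. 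Via the identity $f_\alpha = g_\alpha/(\alpha-1) + \alpha/(\alpha-1)$ from Eq.~(\ref{fg}), convexity of $f_\alpha$ for $\alpha<1$ is equivalent to the concavity condition $g_\alpha''(x) \leq 0$ on $[0,1]$.

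The critical region, as indicated by Figure~\ref{convexgraph2}, is the neighbourhood of $x=1$, so I would first settle the local problem there. Substitute $u=\sqrt{1-x^2}$ so that $A_1=1+u$, $A_2=1-u$, and expand each of the three pieces of $g_\alpha''(x)$ from Eq.~(\ref{2deri}) as a generalised binomial series in $u$. The $u^0$ terms of $\Gamma_1$ and of $-4(\alpha-1)x^{2\alpha-2}$ cancel exactly, while $\Gamma_2$ starts at order $u^2$; after collecting the $u^2$ coefficients, the bracket $\Gamma_1+\Gamma_2-4(\alpha-1)x^{2\alpha-2}$ reduces by elementary algebra to $\frac{4(\alpha-1)(2\alpha^2+2\alpha-3)}{3}\,u^2 + O(u^4)$. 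Combining with the prefactor $B$, which behaves like $\alpha/(4u^2)$ as $u\to 0$, one finds the finite limit $g_\alpha''(x)\to \frac{\alpha(\alpha-1)(2\alpha^2+2\alpha-3)}{3}$ as $x\to 1$. For $\alpha<1$ the factor $\alpha(\alpha-1)$ is negative, so the desired sign $g_\alpha''\leq 0$ near $x=1$ holds precisely when $2\alpha^2+2\alpha-3 \geq 0$, i.e.\ $\alpha \geq (\sqrt{7}-1)/2 \approx 0.8229$. This identifies $\alpha_\star := (\sqrt{7}-1)/2$ as the sharp critical value and is consistent with the numerical observation that convexity holds at $\alpha=0.83$ but fails at $\alpha=0.82$.

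The remaining, and harder, step is to promote this local asymptotic into a global inequality $g_\alpha''(x) \leq 0$ on all of $[0,1]$, uniformly for $\alpha \in (\alpha_\star,1)$. I would split $[0,1]$ into a compact part $[0,1-\delta]$ and the boundary strip $[1-\delta,1]$. On $[0,1-\delta]$, exploit that $g_1 \equiv -\log 2$ is constant, so $g_\alpha''|_{\alpha=1}=0$, together with a Taylor expansion $g_\alpha''(x)=(\alpha-1)\,h(x)+O((\alpha-1)^2)$ whose coefficient $h(x)=\partial_\alpha g_\alpha''|_{\alpha=1}$ can be written in closed form from the representation of $g_\alpha$; the goal is to show $h(x)\geq 0$ on $[0,1-\delta]$ and that the remainder stays dominated by the linear term. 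On $[1-\delta,1]$, the asymptotic expansion of the previous paragraph supplies the sign, provided the tail of the binomial series is quantitatively controlled. The hard part will be choosing $\delta$ uniformly in $\alpha \in (\alpha_\star,1)$ so that the two regions match cleanly and the error terms remain strictly smaller than the leading contribution throughout, since the leading $u^2$ coefficient vanishes as $\alpha\downarrow\alpha_\star$ and the $(\alpha-1)$-expansion loses resolution as $\alpha\downarrow\alpha_\star$ too. This uniform matching near the critical value is presumably the reason the authors present the result as a conjecture backed by numerical evidence rather than as a theorem.
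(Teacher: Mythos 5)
There is no proof in the paper to compare against: the statement is presented as Conjecture~\ref{Con: alphaless1}, supported only by a continuity remark near $\alpha=1$ and by numerical plots of the second derivative (Figure~\ref{convexgraph2}). Your reduction is correct --- monotonicity is already established for all $\alpha>0$, Theorem~\ref{Thm: 2-analytic} then yields the analytic formula, and via Eq.~(\ref{fg}) convexity of $f_{\alpha}$ for $0<\alpha<1$ is equivalent to $g_{\alpha}''\leq 0$ (note that the paper's prose asserting nonnegativity of $g_{\alpha}''$ for $\alpha\geq 0.83$ has the sign backwards for this range; the figure caption, which refers to $f_{\alpha}''$, is the consistent statement). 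Your local computation at $x=1$ is also correct: with $u=\sqrt{1-x^2}$ the bracket in Eq.~(\ref{2deri}) is indeed $\tfrac{4(\alpha-1)(2\alpha^2+2\alpha-3)}{3}u^2+O(u^4)$ while $B\sim\alpha/(4u^2)$, so $g_{\alpha}''(x)\to\alpha(\alpha-1)(2\alpha^2+2\alpha-3)/3$ as $x\to1$, giving the sharp local threshold $\alpha_{\star}=(\sqrt{7}-1)/2\approx0.8229$. This genuinely goes beyond the paper: it explains analytically why the numerics succeed at $\alpha=0.83$ and fail at $\alpha=0.82$ near $x=1$, and identifies the best possible $\epsilon\approx 0.0071$ in the statement.

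However, the proposal is not a proof. The entire content of the conjecture is the global inequality $g_{\alpha}''(x)\leq0$ on all of $[0,1]$ for $\alpha\in[\alpha_{\star},1)$, and your plan for that step --- splitting into $[0,1-\delta]$ and $[1-\delta,1]$, a first-order expansion in $(\alpha-1)$ on the compact part with a coefficient $h(x)$ whose sign is to be checked, and tail control of the binomial series on the boundary strip --- is described but not executed. As you yourself note, the leading terms degenerate at both ends of the $\alpha$-range (the $u^2$ coefficient vanishes as $\alpha\downarrow\alpha_{\star}$, the linear-in-$(\alpha-1)$ term vanishes as $\alpha\uparrow1$), so the uniform choice of $\delta$ and the quantitative error bounds are precisely where the difficulty lives, and none of that is supplied. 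Until it is, your argument establishes convexity only in a neighbourhood of $x=1$ (and, by the paper's own continuity observation, for $\alpha$ near $1$). What you have is a valuable sharpening of the numerical evidence and a correct identification of the critical exponent, not a resolution of the conjecture.
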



\section{Entanglement Constraint in Multi-party Quantum Systems}
\label{Sec: monopoly}

In this section, we establish a mathematical formulation for the
monogamous and polygamous properties of multi-qubit entanglement
in terms of R\'enyi-$\alpha$ entanglement. First, we
show an important property of the function $f_{\alpha}(x)$ in
Eq.~(\ref{f_alpha}). By using the property of $f_{\alpha}(x)$ as well
as the functional relation between R\'enyi-$\alpha$ entanglement and
concurrence for two-qubit states in previous section, we derive a
monogamy inequality of multi-qubit entanglement in terms of
R\'enyi-$\alpha$ entanglement for $\alpha \geq 2$. We also conjecture a
polygamy inequality of multi-qubit entanglement in terms of
R\'enyi-$\alpha$ entanglement for $\alpha$ around 1.

\subsection{Monogamy of multi-qubit entanglement}
\label{Subsec: mono}

For a three-qubit pure state $\ket{\psi}_{ABC}$ 
CKW-inequality was shown as~\cite{ckw}, 
\begin{equation}
\mathcal{C}_{A(BC)}^2\geq\mathcal{C}_{AB}^2+\mathcal{C}_{AC}^2,
\label{eq: CKW}
\end{equation}
where
$\mathcal{C}_{A(BC)}=\mathcal{C}(\ket{\psi}_{A(BC)})$ is the
concurrence of a 3-qubit state $\ket{\psi}_{A(BC)}$ with respect to
the bipartite cut between $A$ and $BC$, and $\mathcal{C}_{AB}$ and
$\mathcal{C}_{AC}$ are the concurrences of the reduced density
matrices onto subsystems $AB$ and $AC$ respectively. Later,
Eq.~(\ref{eq: CKW}) was generalized into arbitrary multi-qubit
systems as
\begin{equation}
\mathcal{C}_{A_1 (A_2 \cdots A_n)}^2\geq\mathcal{C}_{A_1 A_2}^2 +\cdots+\mathcal{C}_{A_1 A_n}^2,
\label{eq. ncmono}
\end{equation}
for an $n$-qubit state $\rho_{A_1\cdots A_n}$~\cite{ov}.

In this section, we provide a class of monogamy inequalities in
multi-qubit systems in terms of R\'enyi-$\alpha$ entanglement.
Before we prove this, we first show a property of the function $f_{\alpha}(x)$ that plays a crucial
role in the proof of R\'enyi-$\alpha$ entanglement monogamy.

\begin{Thm}
For any real $\alpha \geq 2$ and the function,
\begin{equation}
f_{\alpha}(x)=\frac{1}{1-\alpha}\log\left[\left(\frac{1-\sqrt{1-x^2}}{2}\right)^{\alpha}
+\left(\frac{1+\sqrt{1-x^2}}{2}\right)^{\alpha}\right],
\end{equation}
we have
\begin{equation}
f_{\alpha}\left(\sqrt{x^2+y^2}\right)\geq f_{\alpha}(x)+f_{\alpha}(y)
\label{eq: int mono}
\end{equation}
for $0 \leq x,~y \leq1$ such that $0 \leq x^2+y^2 \leq 1$.
\label{Thm: mono_lemma}
\end{Thm}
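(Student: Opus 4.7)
The plan is to reduce the desired inequality to the convexity of a single-variable function. Setting $u=x^2$ and $v=y^2$ and defining $h(t):=f_\alpha(\sqrt{t})$ on $[0,1]$, the claim becomes the superadditivity statement $h(u+v)\ge h(u)+h(v)$ for $u,v\ge 0$ with $u+v\le 1$. Since $h(0)=(1-\alpha)^{-1}\log 1=0$, any convex $h$ on $[0,1]$ with $h(0)=0$ is automatically superadditive: writing $u=\frac{u}{u+v}(u+v)+\frac{v}{u+v}\cdot 0$, convexity gives $h(u)\le \frac{u}{u+v}h(u+v)$, and similarly for $h(v)$; adding these inequalities yields the claim.

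So it suffices to show that $h$ is convex on $[0,1]$ for $\alpha\ge 2$. Using the substitution $s=\sqrt{1-u}$ from the proof of Theorem~\ref{Thm: int_alpha} and the auxiliary function $G(s):=\log[(1+s)^\alpha+(1-s)^\alpha]$, the chain rule yields
\[
h''(u)=\frac{G'(s)-s\,G''(s)}{4s^{3}(\alpha-1)}.
\]
For $\alpha>1$ and $s\in(0,1)$ the denominator is positive. Reusing the abbreviations $A=(1+s)^\alpha+(1-s)^\alpha$, $B=(1+s)^{\alpha-1}-(1-s)^{\alpha-1}$, and $C=(1+s)^{\alpha-2}+(1-s)^{\alpha-2}$, together with the identities $G'=\alpha B/A$ and $G''=\alpha[(\alpha-1)CA-\alpha B^2]/A^2$, convexity of $h$ reduces to the single inequality
\[
A\bigl[B-s(\alpha-1)C\bigr]+\alpha\,s\,B^{2}\ \ge\ 0.
\]

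The main obstacle is this last inequality. A direct check shows that $B-s(\alpha-1)C$ vanishes identically for $\alpha\in\{2,3\}$, in which case nonnegativity follows immediately from the $\alpha s B^{2}$ term. For larger $\alpha$ the bracket can become negative, so the proof must quantify the cancellation against $\alpha s B^{2}$. My plan is to exploit the $s\to -s$ symmetry (which makes the whole expression odd in $s$, hence $s$ times an even analytic function) and then expand $A,B,C$ in binomial series in $s$; a term-by-term analysis should show nonnegativity of every coefficient once $\alpha\ge 2$. The threshold $\alpha=2$ is expected to emerge naturally from this combinatorial accounting, consistent with the known breakdown of CKW-type monogamy for the entanglement of formation ($\alpha=1$).
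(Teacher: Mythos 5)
Your reduction is attractive and genuinely different from the paper's route: setting $h(t)=f_\alpha(\sqrt{t})$, noting $h(0)=0$, and deducing superadditivity from convexity of $h$ is a clean one-variable strategy, and your chain-rule computation leading to the target inequality
\begin{equation*}
A\bigl[B-s(\alpha-1)C\bigr]+\alpha\,s\,B^{2}\ \ge\ 0
\end{equation*}
checks out (as does the observation that the bracket vanishes identically for $\alpha\in\{2,3\}$). The problem is that this inequality --- which is the entire content of the theorem once the reduction is made --- is not proved; it is deferred to a ``plan,'' and the plan as stated does not work. Expanding in powers of $s$, one finds $B-s(\alpha-1)C=-4\sum_{k\ge1}k\binom{\alpha-1}{2k+1}s^{2k+1}$, and the coefficient of $s^{5}$ in the full expression works out to $4(\alpha-1)\alpha\binom{\alpha-1}{3}-16\binom{\alpha-1}{5}$. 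At $\alpha=2.5$ this equals $-0.75$, so the Maclaurin coefficients are \emph{not} all nonnegative for $\alpha\ge2$; for non-integer $\alpha$ the generalized binomial coefficients alternate in sign beyond index $\lfloor\alpha-1\rfloor$, which makes a term-by-term positivity argument hopeless even though the sum itself appears to remain nonnegative. So the crux of the proof is missing, and the proposed way of filling it would fail at the first non-integer exponent you try.

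For comparison, the paper never attempts to prove convexity of $h$ (a strictly stronger statement than what is needed). It works directly with the two-variable function $h_\alpha(x,y)=f_\alpha(\sqrt{x^2+y^2})-f_\alpha(x)-f_\alpha(y)$: it shows the gradient cannot vanish in the interior of the quarter disk (by proving an auxiliary function $l_\alpha$ is strictly increasing, using crude lower bounds on $A_1^{\alpha-1}\pm A_2^{\alpha-1}$ that only require $\alpha\ge2$), and then checks nonnegativity on the boundary $x=0$, $y=0$, $x^2+y^2=1$. If you want to salvage your route, you need an argument for $A[B-s(\alpha-1)C]+\alpha sB^{2}\ge0$ that does not rely on signs of individual series coefficients --- for instance, bounding $B\ge 2(\alpha-1)s\,(\text{something})$ and $C$ from above by elementary means, in the spirit of the paper's Eq.~(\ref{lower}) --- or else abandon the convexity claim and prove superadditivity directly.
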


\begin{proof}
Let $\mathcal{D}=\{(x,y)|~0 \leq x,~y,~x^2+y^2 \leq 1\}$ and
$h_{\alpha}(x,y)$ be a function defined on the domain $\mathcal D$ such that
\begin{equation}
h_{\alpha}(x,y):=f_{\alpha}\left(\sqrt{x^2+y^2}\right)-f_{\alpha}(x)-f_{\alpha}(y).
\label{h}
\end{equation}
Then, it is enough to show that $h_{\alpha}(x,y)$ is a non-negative
function on $\mathcal D$.

Because $h_{\alpha}(x,y)$ is continuous on the domain $\mathcal D$
and analytic in the interior of $\mathcal D$, its maximum or minimum
arises at the critical points or boundary of $\mathcal D$. Here, we
will show that $h_{\alpha}(x,y)$ has neither vanishing gradient in
the interior of $\mathcal D$ nor negative function values on the
boundary of $\mathcal D$, which implies that $h_{\alpha}(x,y)$ is a
non-negative function on the domain $\mathcal D$.

First, let us consider the gradient of $h_{\alpha}(x,y)$,
\begin{equation}
\nabla h_{\alpha}(x, y)=\left(\frac{\partial
h_{\alpha}(x,y)}{\partial x}, \frac{\partial
h_{\alpha}(x,y)}{\partial y}\right)
\label{grad}
\end{equation}
where the first-order partial derivatives of $h_{\alpha}(x,y)$ are
\begin{eqnarray}
\frac{\partial h_{\alpha}(x,y)}{\partial x}&=&
\frac{Cx\left[\left(1+\sqrt{1-x^2-y^2}\right)^{\alpha-1}-
\left(1-\sqrt{1-x^2-y^2}\right)^{\alpha-1}\right]}
{\sqrt{1-x^2-y^2}\left[\left(1-\sqrt{1-x^2-y^2}\right)^{\alpha}+
\left(1+\sqrt{1-x^2-y^2}\right)^{\alpha}\right]}\nonumber\\
&&-\frac{Cx\left[\left(1+\sqrt{1-x^2}\right)^{\alpha-1}-\left(1-\sqrt{1-x^2}\right)^{\alpha-1}\right]}
{\sqrt{1-x^2}\left[\left(1-\sqrt{1-x^2}\right)^{\alpha}+\left(1+\sqrt{1-x^2}\right)^{\alpha}\right]}
\label{deri_x}
\end{eqnarray}
and
\begin{eqnarray}
\frac{\partial h_{\alpha}(x,y)}{\partial y}&=&
\frac{Cy\left[\left(1+\sqrt{1-x^2-y^2}\right)^{\alpha-1}-
\left(1-\sqrt{1-x^2-y^2}\right)^{\alpha-1}\right]}
{\sqrt{1-x^2-y^2}\left[\left(1-\sqrt{1-x^2-y^2}\right)^{\alpha}+
\left(1+\sqrt{1-x^2-y^2}\right)^{\alpha}\right]}\nonumber\\
&&-\frac{Cy\left[\left(1+\sqrt{1-y^2}\right)^{\alpha-1}-\left(1-\sqrt{1-y^2}\right)^{\alpha-1}\right]}
{\sqrt{1-y^2}\left[\left(1-\sqrt{1-y^2}\right)^{\alpha}+\left(1+\sqrt{1-y^2}\right)^{\alpha}\right]},
\label{deri_y}
\end{eqnarray}
with $C=\frac{\alpha}{\alpha-1}$. Now, suppose $\nabla
h_{\alpha}(x_1,y_1)=(0,0)$ for some $(x_1,y_1)$ in the interior of
$\mathcal D$, that is, $0<x_1,~y_1<1$ and $0< x_1^2+y_1^2<1$.
Because both $x_1$ and $y_1$ are nonzero, from Eqs.~(\ref{deri_x})
and (\ref{deri_y}), we have
\begin{eqnarray}
\frac{\left[\left(1+\sqrt{1-{x_1}^2}\right)^{\alpha-1}-\left(1-\sqrt{1-{x_1}^2}\right)^{\alpha-1}\right]}
{\sqrt{1-{x_1}^2}\left[\left(1-\sqrt{1-{x_1}^2}\right)^{\alpha}
+\left(1+\sqrt{1-{x_1}^2}\right)^{\alpha}\right]}\nonumber\\
~~~~~~~~~~~=\frac{\left[\left(1+\sqrt{1-{y_1}^2}\right)^{\alpha-1}
-\left(1-\sqrt{1-{y_1}^2}\right)^{\alpha-1}\right]}
{\sqrt{1-{y_1}^2}\left[\left(1-\sqrt{1-{y_1}^2}\right)^{\alpha}+\left(1+\sqrt{1-{y_1}^2}\right)^{\alpha}\right]}.
\label{vaniequal}
\end{eqnarray}
Furthermore, by defining a function $l_{\alpha}(x)$ on $0< x \leq1$ such that
\begin{equation}
l_{\alpha}(x):=\frac{\left[\left(1+\sqrt{1-{x}^2}\right)^{\alpha-1}
-\left(1-\sqrt{1-{x}^2}\right)^{\alpha-1}\right]}
{\sqrt{1-{x}^2}\left[\left(1-\sqrt{1-{x}^2}\right)^{\alpha}+
\left(1+\sqrt{1-{x}^2}\right)^{\alpha}\right]},
\label{l}
\end{equation}
Eq.~(\ref{vaniequal}) can be rewritten as
\begin{equation}
l_{\alpha}(x_1)=l_{\alpha}(y_1).
\label{lequal}
\end{equation}
Here, we note that the first derivative of $l_{\alpha}(x)$ is
\begin{eqnarray}
\frac{{\rm d}l_{\alpha}(x)}{{\rm d}x}&=&B\left[\frac{\left(A_1^{\alpha}+
A_2^{\alpha}\right)\left(A_1^{\alpha-1}-A_2^{\alpha-1}\right)}{\sqrt{1-x^2}}+
\left(A_1^{\alpha-1}-A_2^{\alpha-1}\right)^2\right]\nonumber\\
&&-4B\left(\alpha-1\right)x^{2\alpha-4}
\label{l1deri}
\end{eqnarray}
with $A_1=1+\sqrt{1-x^2}$, $A_2=1-\sqrt{1-x^2}$ and
$B=\frac{x}{(1-x^2)\left(A_1^{\alpha}+A_2^{\alpha}\right)^2}$,
and from the inequalities in Eq.~(\ref{lower}), we have
\begin{eqnarray}
\frac{\left(A_1^{\alpha}+A_2^{\alpha}\right)\left(A_1^{\alpha-1}-A_2^{\alpha-1}\right)}
{\sqrt{1-x^2}}&\geq& 4(\alpha-1),\nonumber\\
\left(A_1^{\alpha-1}-A_2^{\alpha-1}\right)^2&\geq&4(\alpha-1)^2(1-x^2),
\label{l1bnd1}
\end{eqnarray}
which implies
\begin{equation}
\frac{{\rm d}l_{\alpha}(x)}{{\rm d}x}\geq 4B\left[(\alpha-1)+(\alpha-1)^2(1-x^2)-\left(\alpha-1\right)x^{2\alpha-4} \right].
\label{l1bnd2}
\end{equation}

For the region of $\alpha\geq2$ and $0< x < 1$, $\alpha-1$ in the
second term of the right-hand side of Eq.~(\ref{l1bnd2}) is always
larger than or equal to 1, whereas $x^{2\alpha-4}$ in the last term
is always less than or equal to 1, thus
\begin{eqnarray}
\frac{{\rm d}l_{\alpha}(x)}{{\rm d}x}&\geq& 4B\left[(\alpha-1)+(\alpha-1)^2(1-x^2)
-\left(\alpha-1\right)x^{2\alpha-4}\right]\nonumber\\
&\geq&4B\left[(\alpha-1)+(\alpha-1)(1-x^2)-\left(\alpha-1\right)\right]\nonumber\\
&=&4B(\alpha-1)(1-x^2)\nonumber\\
&\geq&0,
\label{l1bnd3}
\end{eqnarray}
and the last inequality is strict for $0<x<1$.
In other words,
$l_{\alpha}(x)$ is a strictly monotone-increasing function on
$0<x<1$, and therefore Eq.~(\ref{lequal}) implies $x_1=y_1$. (Also
note that the possible range of $x_1$ now becomes $0\leq x_1 \leq
\frac{1}{\sqrt{2}}$.) However, from Eqs.~(\ref{deri_x}) and
(\ref{deri_y}), $\nabla h_{\alpha}(x_1, y_1)=(0,0)$ together with
$x_1=y_1$ implies that
\begin{equation}
l_{\alpha}(\sqrt{2}x_1)=l_{\alpha}(x_1),
\end{equation}
which contradicts to the strict monotonicity of $l_{\alpha}(x)$ for
$0<x<1$. Thus, $\nabla h_{\alpha}(x_1, y_1)\neq (0,0)$ for any
$(x_1, y_1)$ in the interior of $\mathcal D$.

Now, let us consider the function value of $h_{\alpha}(x, y)$ on the
boundary of $\mathcal D$, that is $x=0$ or $y=0$, or $x^2+y^2=1$. If
one of $x$ or $y$ is 0, then it is clear to check that
$h_{\alpha}(x, y)=0$. For the case when $x^2+y^2=1$, we have
\begin{eqnarray}
h_{\alpha}(x,y)&=&f_{\alpha}\left(\sqrt{x^2+y^2}\right)-f_{\alpha}(x)-f_{\alpha}(y)\nonumber\\
&=&f_{\alpha}\left(1\right)-f_{\alpha}(x)-f_{\alpha}(\sqrt{1-x^2})\nonumber\\
&=&1-\frac{2\alpha}{\alpha-1}+\frac{1}{\alpha-1}\log\left[\left(1+x^2\right)^{\alpha}+\left(1-x^2\right)^{\alpha}\right]\nonumber\\
&&+\frac{1}{\alpha-1}\log\left[\left(1-\sqrt{1-x^2}\right)^{\alpha}+\left(1+\sqrt{1-x^2}\right)^{\alpha}\right]\nonumber\\
&=:& m_{\alpha}(x)
\end{eqnarray}
where $m_{\alpha}(x)$ is a one-parameter real-valued analytic
function defined on $0 \leq x \leq 1$. For $\alpha \geq2$, it is
also straightforward to check that the function $m_{\alpha}(x)$ has
only one critical point at $x=\frac{1}{\sqrt{2}}$ through the domain
$0 \leq x \leq 1$. Furthermore, all the function values of
$m_{\alpha}(x)$ at the critical point and the boundary are
nonnegative, that is, we have
\begin{equation}
h_{\alpha}(x,y)=m_{\alpha}(x) \geq 0
\label{m}
\end{equation}
for $x^2+y^2=1$ and $0 \leq x \leq 1$. Thus, $h_{\alpha}(x, y)$ is a
non-negative function on the domain $\mathcal D$, and this completes
the proof.
\end{proof}

Now, by using Theorem~\ref{Thm: mono_lemma} together with
Corollary~\ref{Cor: real_alpha} in previous section, we have the
following theorem, which is the secondary result of this paper.

\begin{Thm}
For $\alpha \geq 2$ and any multi-qubit state $\rho_{A_1A_2\cdots A_n}$ in
$\mathcal B \left({\C^{2}}^{\otimes n}\right)$, we have
\begin{equation}
E_{\alpha}\left(\rho_{A_1(A_2\cdots A_n)}\right)\geq
E_{\alpha}\left(\rho_{A_1A_2}\right)+\cdots +E_{\alpha}\left(\rho_{A_1A_n}\right),
\label{eq: n_mono}
\end{equation}
where $E_{\alpha}\left(\rho_{A_1(A_2\cdots A_n)}\right)$ is the
R\'enyi-$\alpha$ entanglement of $\rho_{A_1A_2\cdots A_n}$ with
respect to the bipartite cut between $A_1$ and the others, and
$E_{\alpha}\left(\rho_{A_1A_i}\right)$ is the R\'enyi-$\alpha$ entanglement of
the reduced density matrix $\rho_{A_1A_i}$ on two-qubit
subsystem $A_1A_i$ for $i=2, \ldots, n$. \label{Thm: mono}
\end{Thm}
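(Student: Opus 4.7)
The plan is to first prove the pure-state version of the inequality from three ingredients---the pure-state identity $E_{\alpha}=f_{\alpha}(\mathcal{C})$ from Eqs.~(\ref{RM pure})--(\ref{relation_pure}), the Osborne--Verstraete concurrence monogamy of Eq.~(\ref{eq. ncmono}), and the sub-additivity lemma Theorem~\ref{Thm: mono_lemma}---and then lift to mixed states by a convex-roof argument.

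For a pure state $\ket{\psi}_{A_1\cdots A_n}$, because $A_1$ is a single qubit, the Schmidt decomposition across the cut $A_1\,|\,A_2\cdots A_n$ has at most two terms, so $E_{\alpha}(\ket{\psi}_{A_1(A_2\cdots A_n)})=f_{\alpha}(\mathcal{C}(\ket{\psi}_{A_1(A_2\cdots A_n)}))$ by Eqs.~(\ref{RM pure})--(\ref{relation_pure}). Combining this with Eq.~(\ref{eq. ncmono}) and the strict monotonicity of $f_{\alpha}$ (Theorem~\ref{Thm: int_alpha}), I would obtain
\begin{equation}
E_{\alpha}(\ket{\psi}_{A_1(A_2\cdots A_n)})\;\geq\; f_{\alpha}\!\left(\sqrt{\textstyle\sum_{i=2}^{n}\mathcal{C}^2_{A_1A_i}}\right).
\end{equation}
Setting $s_k:=\sqrt{\sum_{i=2}^{k}\mathcal{C}^2_{A_1A_i}}$, the identity $s_k^2=s_{k-1}^2+\mathcal{C}^2_{A_1A_k}$ together with Theorem~\ref{Thm: mono_lemma}---whose domain condition $s_{k-1}^2+\mathcal{C}^2_{A_1A_k}\leq \mathcal{C}^2_{A_1(A_2\cdots A_n)}\leq 1$ is automatic from OV---yields $f_{\alpha}(s_k)\geq f_{\alpha}(s_{k-1})+f_{\alpha}(\mathcal{C}_{A_1A_k})$. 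Induction on $k$ unpacks the square root to $f_{\alpha}(s_n)\geq\sum_{i=2}^{n}f_{\alpha}(\mathcal{C}_{A_1A_i})$, and Corollary~\ref{Cor: real_alpha} identifies each summand with $E_{\alpha}(\rho_{A_1A_i})$, settling the pure-state case.

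To reach a general mixed $\rho_{A_1\cdots A_n}$, I would take an optimal pure-state decomposition $\rho_{A_1\cdots A_n}=\sum_k p_k\ket{\psi_k}\bra{\psi_k}$ achieving $E_{\alpha}(\rho_{A_1(A_2\cdots A_n)})=\sum_k p_k E_{\alpha}(\ket{\psi_k}_{A_1(A_2\cdots A_n)})$. Applying the pure-state monogamy to each $\ket{\psi_k}$, using linearity of the partial trace $\rho_{A_1A_i}=\sum_k p_k\rho^{(k)}_{A_1A_i}$, and invoking convexity of the convex-roof-defined measure $E_{\alpha}$ on each two-qubit reduction, $\sum_k p_k E_{\alpha}(\rho^{(k)}_{A_1A_i})\geq E_{\alpha}(\rho_{A_1A_i})$, finishes the argument.

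The main technical point is the correct sequencing of the three inequalities. The pure-state formula $E_{\alpha}=f_{\alpha}(\mathcal{C})$ is only available on the single-qubit cut, while Theorem~\ref{Thm: mono_lemma} operates on plain real arguments constrained to the disc $\mathcal{D}$; one must therefore first use $f_\alpha$-monotonicity against the OV bound, and only then unpack the square root by induction, because OV is precisely what globally certifies that every partial sum $s_k^2$ stays in $[0,1]$ and thus keeps each inductive invocation of Theorem~\ref{Thm: mono_lemma} inside its domain.
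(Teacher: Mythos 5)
Your proposal is correct and rests on exactly the same ingredients, deployed in essentially the same order, as the paper's own proof: the pure-state relation $E_{\alpha}=f_{\alpha}(\mathcal{C})$ across the single-qubit cut, the Osborne--Verstraete bound of Eq.~(\ref{eq. ncmono}), monotonicity of $f_{\alpha}$, and iterated application of Theorem~\ref{Thm: mono_lemma} with the domain condition $s_k^2\leq 1$ certified just as you describe. The only organizational difference is that the paper treats the mixed state in one pass---using convexity of $f_{\alpha}$ on the optimal decomposition to reach $f_{\alpha}\left(\mathcal{C}_{A_1(A_2\cdots A_n)}\right)$ and then invoking the mixed-state form of Eq.~(\ref{eq. ncmono})---whereas you prove the pure-state inequality first and lift it via convexity of the convex-roof $E_{\alpha}$ on the two-qubit marginals; both routes are valid.
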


\begin{proof}
From Eq.~(\ref{eq. ncmono}), we have
\begin{equation}
\mathcal{C}_{A_1 (A_2 \cdots A_n)}\geq \sqrt{\mathcal{C}_{A_1 A_2}^2 +\cdots+\mathcal{C}_{A_1 A_n}^2},
\label{eq. rootncmono}
\end{equation}
where $\mathcal{C}_{A_1 (A_2 \cdots A_n)}$ and $\mathcal{C}_{A_1
A_i}$ are the concurrences of $\rho_{A_1(A_2\cdots A_n)}$ and
$\rho_{A_1A_i}$ respectively.

Now, let $\rho_{A_1(A_2\cdots A_n)}=\sum_i p_i
\ket{\psi_i}_{A_1(A_2\cdots A_n)}\bra{\psi_i}$ be an optimal
decomposition for $E_{\alpha}\left(\rho_{A_1(A_2\cdots A_n)}\right)$
such that $E_{\alpha}\left(\rho_{A_1(A_2\cdots A_n)}\right)=\sum_i
p_i E_{\alpha}\left(\ket{\psi_i}_{A_1(A_2\cdots A_n)}\right)$.
Because each $\ket{\psi_i}_{A_1(A_2\cdots A_n)}$ in the
decomposition has Schmidt-rank less than or equal to two (they are
$2\otimes d$ pure states for $d=2^{\otimes n-1}$), its concurrence
and R\'enyi-$\alpha$ entanglement are related by the function
$f_{\alpha}(x)$ in Eq.~(\ref{2formula}), that is,
\begin{equation}
E_{\alpha}\left(\ket{\psi_i}_{A_1(A_2\cdots A_n)}\right)=
f_{\alpha}\left(\mathcal{C}(\ket{\psi_i}_{A_1(A_2\cdots A_n)})\right)
\end{equation}
for each i.
Thus, we have
\begin{eqnarray}
E_{\alpha}\left(\rho_{A_1(A_2\cdots A_n)}\right)
&=&\sum_i p_i E_{\alpha}\left(\ket{\psi_i}_{A_1(A_2\cdots A_n)}\right)\nonumber\\
&=&\sum_i p_i f_{\alpha}\left(\mathcal{C}(\ket{\psi_i}_{A_1(A_2\cdots A_n)})\right)\nonumber\\
&\geq&f_{\alpha}\left( \sum_i p_i \mathcal{C}(\ket{\psi_i}_{A_1(A_2\cdots A_n)})\right)\nonumber\\
&\geq&f_{\alpha}\left(\mathcal{C}_{A_1 (A_2 \cdots A_n)}\right),
\label{fineq}
\end{eqnarray}
where the first inequality is by the convexity of $f_{\alpha}(x)$,
and the second inequality is by the definition of concurrence and
the monotonicity of $f_{\alpha}(x)$. Furthermore, by Eq.~(\ref{eq.
rootncmono}) together with Theorem~\ref{Thm: mono_lemma}, we have
\begin{eqnarray}
f_{\alpha}\left(\mathcal{C}_{A_1 (A_2 \cdots A_n)}\right)&\geq&
f_{\alpha}\left(\sqrt{\mathcal{C}_{A_1 A_2}^2 +\cdots+\mathcal{C}_{A_1 A_n}^2}\right)\nonumber\\
&\geq&f_{\alpha}\left(\mathcal{C}_{A_1 A_2}\right)+f_{\alpha}\left(\sqrt{\mathcal{C}_{A_1 A_3}^2
+\cdots+\mathcal{C}_{A_1 A_n}^2}\right)\nonumber\\
&&~~~~~~~~~~~~~~~\vdots\nonumber\\
&\geq& f_{\alpha}\left(\mathcal{C}_{A_1 A_2}\right)+\cdots+f_{\alpha}\left(\mathcal{C}_{A_1 A_n}\right)\nonumber\\
&=& E_{\alpha}\left(\rho_{A_1A_2}\right)+\cdots +E_{\alpha}\left(\rho_{A_1A_n}\right),
\label{monoineq}
\end{eqnarray}
where the first inequality is by the monotonicity of
$f_{\alpha}(x)$, the other inequalities are by iterative use of
Theorem~\ref{Thm: mono_lemma}, and the last equality is by the
functional relation of R\'enyi-$\alpha$ entanglement and concurrence
for two-qubit states. Thus, Eqs.~(\ref{fineq}) together with
(\ref{monoineq}) complete the proof.
\end{proof}

In fact, it was recently shown that R\'enyi-$\alpha$ entanglement for
$\alpha=2$ can be used to establish a monogamy inequality for
multi-qubit systems by straightforward calculation~\cite{co}.
However, Theorem~\ref{Thm: mono} says that the monogamous property
of entanglement in multi-qubit systems can be mathematically
characterized in terms of R\'enyi-$\alpha$ entanglement for all positive
real number $\alpha$ larger than or equal to 2.

\subsection{Polygamy of multi-qubit entanglement}
\label{Subsec: poly}
In previous section, we have established the monogamy
inequalities of multi-qubit entanglement in terms of R\'enyi-$\alpha$
entanglement for all positive real number $\alpha\geq2$. Here, we
consider the case when $0<\alpha<2$, and claim another kind of
entanglement constraint in multi-qubit systems using
R\'enyi-$\alpha$ entropy.

Let us first recall the definition of {\em Entanglement of Assistance} (EoA)~\cite{cohen}
 for a bipartite state $\rho_{AB}$, that is,
\begin{equation}
E^a(\rho_{AB})=\max \sum_k p_k E\left({\ket {\psi_k}}_{AB}\right),
\label{EoA}
\end{equation}
where the maximum is taken over all possible pure state
decompositions of $\rho_{AB}=\sum_k p_k
\ket{\psi_k}_{AB}\bra{\psi_k}$.
Here, we note that EoA in Eq.~(\ref{EoA}) is clearly a mathematical
dual to EoF in Eq.~(\ref{EoF}) because one of them is the maximum
average entanglement over all possible pure state decompositions
whereas the other is the minimum. Moreover, by introducing a third party
$C$ that has the purification of
$\rho_{AB}$, $E^a(\rho_{AB})$ can also be considered as the maximum
achievable entanglement between $A$ and $B$ assisted by $C$. (This
is the reason why it is called the {\em assistance}.) In other
words, $E^a(\rho_{AB})$ is the maximal entanglement that can be {\em
distributed} between $A$ and $B$ assisted by the environment;
therefore, EoA is also physically dual to the concept of {\em
formation}.

Similar to the duality between EoF and EoA, we can also have a dual
concept to concurrence: {\em Concurrence of Assistance}
(CoA)~\cite{lve} for a bipartite state $\rho_{AB}$ is defined as
\begin{equation}
\mathcal{C}^a(\rho_{AB})=\max \sum_k p_k \mathcal{C}({\ket {\psi_k}}_{AB}),
\label{CoA}
\end{equation}
where the maximum is taken over all possible pure state
decompositions of $\rho_{AB}=\sum_k p_k
\ket{\psi_k}_{AB}\bra{\psi_k}$. Furthermore, it was shown that
there exists a different kind of entanglement constraint in multi-qubit
systems in terms of CoA~\cite{gbs}. More precisely, for any pure state
$\ket{\psi}_{A_1 \cdots A_n}$ in an $n$-qubit system,
we have
\begin{equation} \mathcal{C}_{A_1 (A_2 \cdots
A_n)}^2  \leq  (\mathcal{C}^a_{A_1 A_2})^2
+\cdots+(\mathcal{C}^a_{A_1 A_n})^2, \label{nCdual}
\end{equation}
where $\mathcal{C}_{A_1 (A_2 \cdots A_n)}$ is the concurrence of
$\ket{\psi}_{A_1 \cdots A_n}$ with respect to the bipartite cut between $A_1$
and $A_{2}\cdots A_{n}$, and $\mathcal{C}^a_{A_1 A_i}$ is the CoA of
the reduced density matrix
$\rho_{A_1A_i}$ for $i=2,\ldots ,n$.

Eq.~(\ref{nCdual}) is known as the dual monogamy or polygamy
inequality of entanglement in multi-qubit systems: Whereas
concurrence can be used to characterize the monogamy of multipartite
entanglement as in Eq.~(\ref{eq. ncmono}), its dual quantity, CoA
can also be used for the dual monogamy of multipartite
entanglement. Later, it was shown that polygamy of multi-qubit entanglement can also be
characterized in terms of EoA, that is, for any multi-qubit pure state
$\ket{\psi}_{A_1 \cdots A_n}$,
\begin{equation}
E\left( \ket{\psi}_{A_1(A_2 \cdots A_n)}\right)\leq  E^a(\rho_{A_1 A_2}) +\cdots+E^a(\rho_{A_1
A_n}),
\label{nEdual}
\end{equation}
where $E\left( \ket{\psi}_{A_1(A_2 \cdots A_n)}\right)=S\left(\rho_A
\right)$ is the entanglement of $\ket{\psi}_{A_1 \cdots A_n}$ with
respect to the bipartite cut $A_1$ and $A_{2}\cdots A_{n}$, and
$E^a(\rho_{A_1 A_i})$ is the EoA of the reduced density matrix
$\rho_{A_1 A_i}$ for $i=2,\cdots,n$~\cite{bgk}. More recently, a
general polygamy inequality of multipartite entanglement in
arbitrary dimensional quantum systems was proposed by using an
analytical upper bound of CoA~\cite{kim}.

Here, we claim that the polygamous property of multi-qubit
entanglement can also be shown by using R\'enyi-$\alpha$ entropy for
$\alpha$ around 1. In fact, we can intuitively expect this: Due to
the continuity of R\'enyi-$\alpha$ entropy with respect to $\alpha$,
we can easily expect that the inequality in Eq.~(\ref{nEdual}) would
also be true by using R\'enyi-$\alpha$ entropy, instead of
von Neumann entropy, where $\alpha$ is in a region of small
perturbation from 1. However, we conjecture, with strong numerical
evidences, a specific region of $\alpha$ where the polygamy
inequality can be obtained using R\'enyi-$\alpha$ entropy.

For any bipartite state $\rho_{AB}$, let us first define a
dual quantity to R\'enyi-$\alpha$ entanglement as
\begin{equation}
E_{\alpha}^a(\rho_{AB}):=\max \sum_k p_k E_{\alpha}\left({\ket {\psi_k}}_{AB}\right),
\label{RoA}
\end{equation}
where the maximum is taken over all possible pure state
decompositions of $\rho_{AB}=\sum_k p_k
\ket{\psi_k}_{AB}\bra{\psi_k}$, and call it {\em R\'enyi-$\alpha$
Entanglement of Assistance} (REoA).

Similar to the functional relation between concurrence and
R\'enyi-$\alpha$ entanglement for two-qubit states in
Eq.~(\ref{2formula}), the same function $f_{\alpha}(x)$ can also
relate REoA of a two-qubit state with its CoA. For a two-qubit state
$\rho_{AB}$, let $\rho_{AB}=\sum_{i} p_i
\ket{\psi_i}_{AB}\bra{\psi_i}$ be an optimal decomposition for its
CoA, that is,
\begin{equation}
\mathcal{C}^{a}\left( \rho_{AB} \right)=\sum_{i} p_i \mathcal {C}\left( \ket{\psi_i}_{AB}\right),
\label{CoAopt}
\end{equation}
where $\mathcal{C}^{a}\left( \rho_{AB} \right)$ is the CoA of
$\rho_{AB}$ defined in Eq.~(\ref{CoA}). By Theorem~\ref{Thm:
int_alpha}, $f_{\alpha}(x)$ is convex for the range of $\alpha \geq
1$, thus, for this range of $\alpha$, we have
\begin{eqnarray}
f_{\alpha}\left( \mathcal{C}^{a}\left( \rho_{AB} \right)\right)&=&f_{\alpha}
\left( \sum_{i} p_i \mathcal {C}\left( \ket{\psi_i}_{AB}\right)\right)\nonumber\\
&\leq& \sum_{i} p_i  f_{\alpha}\left(\mathcal {C}\left( \ket{\psi_i}_{AB}\right)\right)\nonumber\\
&=& \sum_{i} p_i  E_{\alpha}\left( \ket{\psi_i}_{AB}\right)\nonumber\\
&\leq&E_{\alpha}^a(\rho_{AB}),
\label{f_Ea}
\end{eqnarray}
where the first inequality is by the convexity of $f_{\alpha}(x)$
and the last inequality is by the definition of REoA. Furthermore,
based on Conjecture~\ref{Con: alphaless1}, we also claim that
Eq.~(\ref{f_Ea}) is true for $\alpha > 0.83-\epsilon$ with small
$\epsilon$.

Now, let us consider a property of $h_{\alpha}(x,y)$ in
Eq.~(\ref{h}). From Theorem~\ref{Thm: mono_lemma}, we first note
that the continuity of $h_{\alpha}(x,y)$ in Eq.~(\ref{h}) with
respect to $\alpha$ assures its positivity for $\alpha$ slightly
less than 2. In other words, the monogamy inequality in
Eq.~(\ref{eq: n_mono}) of Theorem~\ref{Thm: mono} that is
analytically proven for $\alpha \geq2$ is also true for this region
of $\alpha$. However, $h_{\alpha}(x,y)$ has negative function
values for most case of $0<\alpha <2$, which is illustrated in
Figure~\ref{hgraph2}. In Figure~\ref{hgraph2} (a), negative function
values of $h_{\alpha}(x,y)$ are observed near
$(x,y)=(0,0)$ when $\alpha=1.9$. Furthermore, the region in the
domain of $h_{\alpha}(x,y)$ with negative function values is getting
larger as $\alpha$ decreases. (Figure~\ref{hgraph2} (b))

\begin{figure}
\parbox{5.5cm}{
\begin{center}
\includegraphics[width=\linewidth]{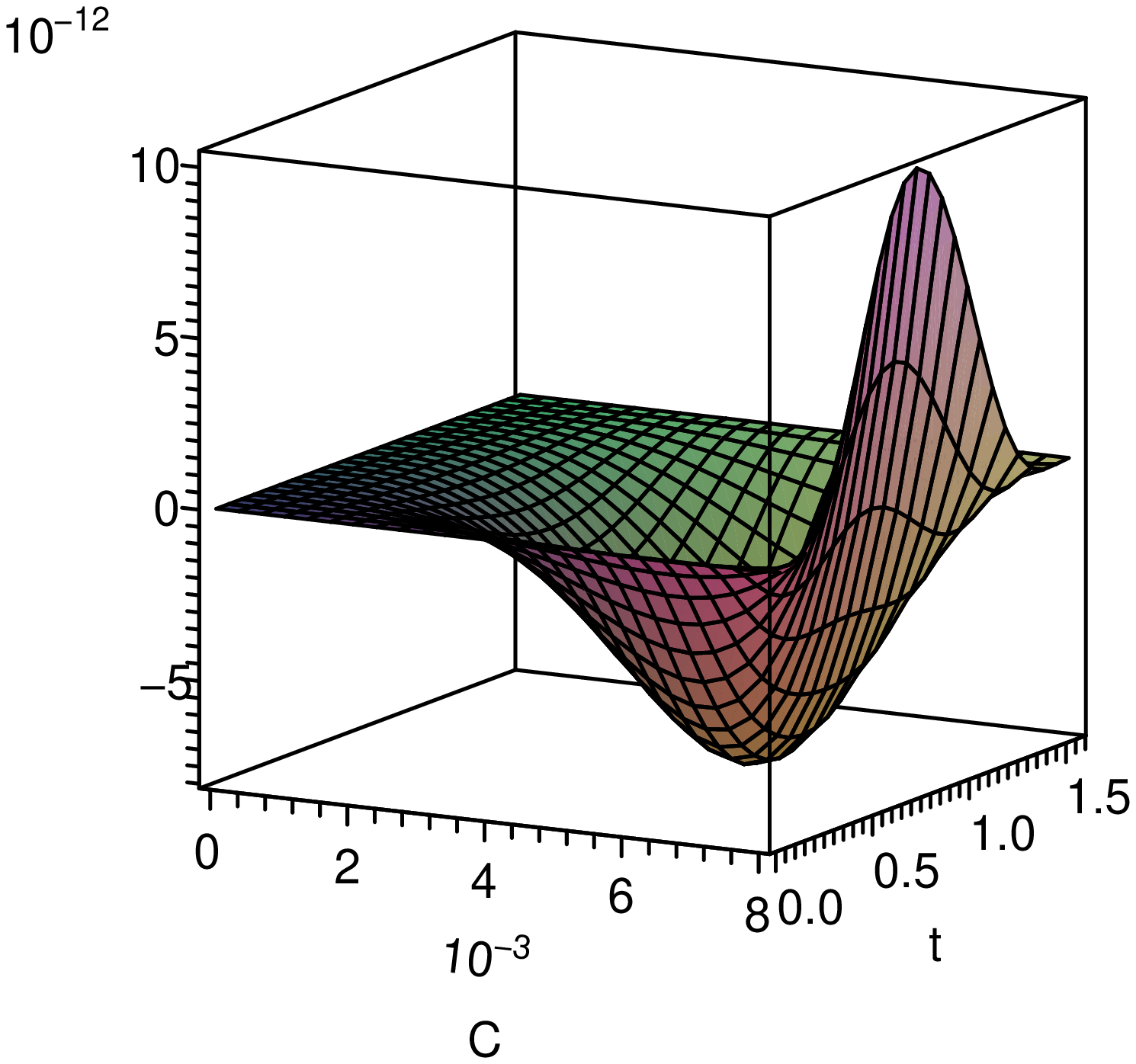}\\
(a) Graph of $h_{1.9}(x,y)$ for $x^2+y^2\leq 0.008$.
\end{center}
}
\hfill
\parbox{4.8cm}{
\begin{center}
\includegraphics[width=\linewidth]{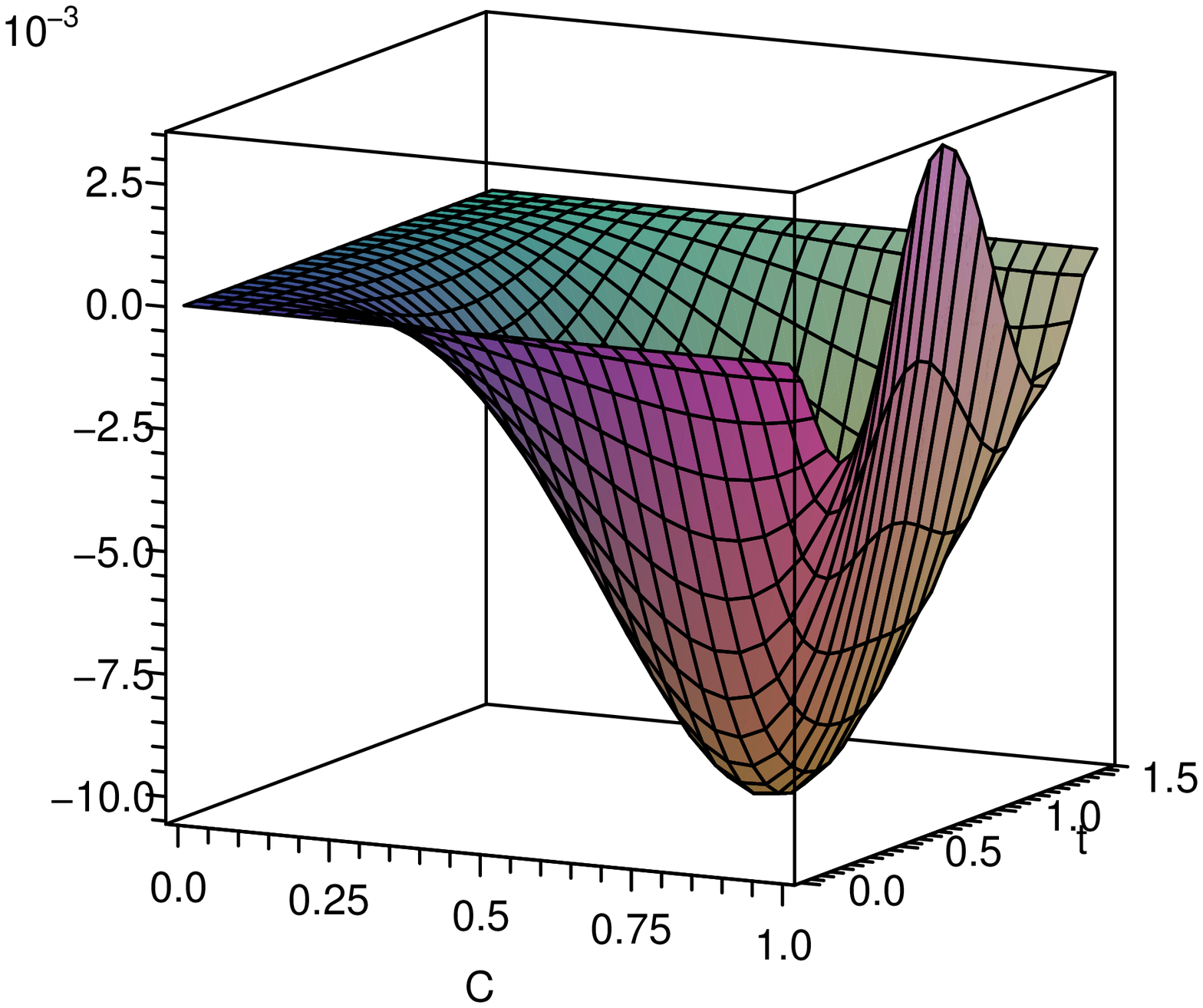}\\
(b) Graph of $h_{1.44}(x,y)$.
\end{center}
}
\hfill
\parbox{4.8cm}{
\begin{center}
\includegraphics[width=\linewidth]{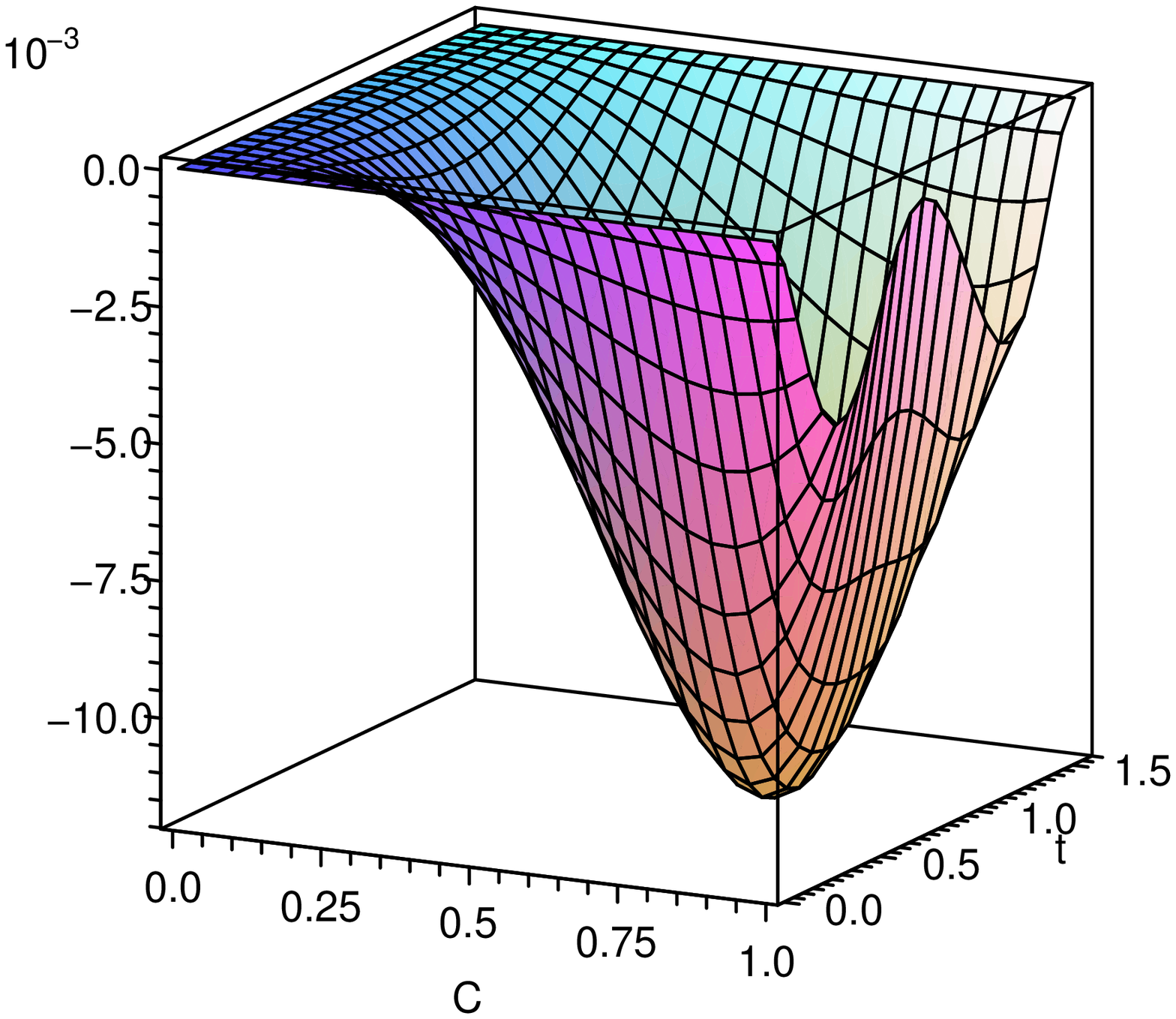}\\
(c) Graph of $h_{1.43}(x,y)$.
\end{center}
} \caption{(a), (b) and (c) illustrate the graphs of
$h_{\alpha}(x,y)$ for $\alpha=1.9$, $1.44$ and $1.43$ respectively.
In each picture, the variables $x$ and $y$ are reparameterized as
$x=C\cos t$ and $y=C\sin t$ for $0\leq C \leq1$ and $0\leq t \leq
\pi/2$. The vertical axes in (a), (b) and (c) are scaled by
$10^{-12}$, $10^{-3}$ and $10^{-3}$ respectively.
 }\label{hgraph2}
\end{figure}

In fact, for the limiting case of $\alpha \rightarrow 1$,
$h_{\alpha}(x,y)$ was analytically shown to be a non-positive
function~\cite{bgk}, that is,
\begin{equation}
{\mathcal E}(\sqrt{x^2+y^2})\leq {\mathcal E}(x)+{\mathcal E}(y),
\label{Epoly}
\end{equation}
for non-negative $x$ and $y$ such that $0 \leq x^2+y^2 \leq
1$ where ${\mathcal E}(x)=\lim_{\alpha\rightarrow 1}
f_{\alpha}(x)$ is defined in Eq.~(\ref{eps}).
Again, due to the continuity of $h_{\alpha}(x,y)$
with respect to $\alpha$, we can assure that
$h_{\alpha}(x,y)$ is nonpositive everywhere when $\alpha$ is around
1. For a specific region of $\alpha$, we have numerically tested for
various values of $\alpha$, and $h_{\alpha}(x,y)$ is observed to
have non-positive function values for $0 \leq \alpha \leq
1.43+\epsilon$ with $0 <\epsilon <0.01$, which is illustrated in
Figure~\ref{hgraph2} (c).

\begin{Con}
For any positive $0 \leq \alpha \leq 1.43+\epsilon$ with some
positive $\epsilon$ such that $0\leq\epsilon<0.01$ and the function
\begin{equation}
f_{\alpha}(x)=\frac{1}{1-\alpha}\log\left[\left(\frac{1-\sqrt{1-x^2}}{2}\right)^{\alpha}
+\left(\frac{1+\sqrt{1-x^2}}{2}\right)^{\alpha}\right],
\end{equation}
defined on the domain $\mathcal{D}=\{(x,y)|0 \leq x,~y,~x^2+y^2 \leq
1\}$, we have
\begin{equation}
f_{\alpha}(\sqrt{x^2+y^2})\leq
f_{\alpha}(x)+f_{\alpha}(y).
\label{fpoly}
\end{equation}
\label{Con: hpositive}
\end{Con}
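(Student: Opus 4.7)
The plan is to mirror the architecture of Theorem~\ref{Thm: mono_lemma} but with the target inequality reversed. Define
\[
h_\alpha(x,y) := f_\alpha(\sqrt{x^2+y^2}) - f_\alpha(x) - f_\alpha(y)
\]
on $\mathcal{D}$, so that Eq.~(\ref{fpoly}) is equivalent to $h_\alpha\le 0$ throughout $\mathcal{D}$. Because $h_\alpha$ is continuous on the compact set $\mathcal{D}$ and analytic in its interior, its maximum is attained either at an interior critical point or on the boundary, and the analysis splits along exactly the same lines as in Theorem~\ref{Thm: mono_lemma}.

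I would address the boundary first, as it already fixes the upper edge of the conjectured range. On the segments $x=0$ and $y=0$, $h_\alpha = 0$ from $f_\alpha(0) = 0$. On the arc $x^2+y^2 = 1$, using $f_\alpha(1) = 1$, the reduction
\[
m_\alpha(x) := h_\alpha(x,\sqrt{1-x^2}) = 1 - f_\alpha(x) - f_\alpha(\sqrt{1-x^2})
\]
satisfies $m_\alpha(0) = m_\alpha(1) = 0$, and the same critical-point computation as in Theorem~\ref{Thm: mono_lemma} shows that its unique interior critical point on $[0,1]$ is $x = 1/\sqrt{2}$. Hence the arc delivers $h_\alpha \le 0$ exactly when $f_\alpha(1/\sqrt{2})\ge 1/2$, which after using the logarithm base~2 rearranges to
\[
\left(\frac{1-1/\sqrt{2}}{2}\right)^{\alpha}+\left(\frac{1+1/\sqrt{2}}{2}\right)^{\alpha} \le 2^{(1-\alpha)/2}.
\]
This is an equality at $\alpha = 1$, strict on a right-neighborhood of $1$, and fails at $\alpha=2$ (where the left-hand side is $3/4 > 2^{-1/2}$); its transcendental crossover at $\alpha^\ast \approx 1.43$ is precisely the upper bound stated in the conjecture, so the numerical constant has a clean analytical origin.

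For the interior, the gradient computation repeats that of Theorem~\ref{Thm: mono_lemma}: any critical $(x_1,y_1)$ in the interior satisfies $l_\alpha(x_1) = l_\alpha(y_1)$ together with $l_\alpha(\sqrt{x_1^2+y_1^2}) = l_\alpha(x_1)$, where $l_\alpha$ is the function in Eq.~(\ref{l}). In the monogamy regime, strict monotonicity of $l_\alpha$ established through Eqs.~(\ref{l1bnd2})--(\ref{l1bnd3}) forced no such interior solutions; that argument used both $\alpha-1\ge 1$ and $x^{2\alpha-4}\le 1$, both of which can fail when $\alpha<2$, and the term $x^{2\alpha-4}$ in fact blows up as $x\to 0^+$ in that regime. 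The main obstacle is therefore this interior critical-point analysis: one must either refine the bounds~(\ref{lower}) into estimates uniform in $\alpha\in[0,2)$ that still force $l_\alpha$ to be injective on the relevant sub-region (for example the diagonal triangle $x\le 1/\sqrt{2}$), or evaluate $h_\alpha$ directly at each admissible critical point and bound it. A plausible partial route is to combine continuity from the endpoint $\alpha = 1$ (where Eq.~(\ref{Epoly}) already yields the result in a full neighborhood) with the arc-boundary threshold above, and to close the middle by a sharper integral representation of $A_1^{\alpha-1}\pm A_2^{\alpha-1}$; the absence of such a uniform analytical estimate is exactly what has so far kept the statement a conjecture.
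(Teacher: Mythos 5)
You should first note that the paper itself does not prove this statement: it is stated as a conjecture, supported only by the numerical plots in Figure~\ref{hgraph2}, the analytic limit case $\alpha\to1$ of Eq.~(\ref{Epoly}) from~\cite{bgk}, and a continuity argument that guarantees validity in some unquantified neighbourhood of $\alpha=1$. So there is no proof in the paper to compare yours against, and your proposal --- candidly --- is not a proof either: you explicitly leave open the interior critical-point analysis, which is exactly the part that the monogamy argument of Theorem~\ref{Thm: mono_lemma} cannot supply here, since the bounds in Eqs.~(\ref{l1bnd2})--(\ref{l1bnd3}) rely on $\alpha-1\geq1$ and $x^{2\alpha-4}\leq1$, and as you correctly observe $x^{2\alpha-4}$ diverges as $x\to0^+$ once $\alpha<2$. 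Until that interior analysis (or a direct bound on $h_\alpha$ at admissible critical points) is supplied, the statement remains a conjecture under your approach as well.

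That said, your boundary analysis adds something genuinely absent from the paper: the observation that on the arc $x^2+y^2=1$ the condition reduces to $f_{\alpha}(1/\sqrt{2})\geq1/2$, i.e.
\begin{equation}
\left(\frac{1-1/\sqrt{2}}{2}\right)^{\alpha}+\left(\frac{1+1/\sqrt{2}}{2}\right)^{\alpha}\leq 2^{(1-\alpha)/2},
\end{equation}
whose crossover at $\alpha^{\ast}\approx1.43$ gives an analytic \emph{necessary} condition that explains the numerical threshold in Figure~\ref{hgraph2} (where the violation for $\alpha=1.44$ indeed appears at $C=1$, $t=\pi/4$, i.e.\ $x=y=1/\sqrt{2}$). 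I verified your endpoint checks: equality at $\alpha=1$, strictness just above $1$ (the derivative of the difference at $\alpha=1$ is $\approx-0.07<0$), and failure at $\alpha=2$ ($3/4>2^{-1/2}$). Two caveats remain even on the boundary: (i) you assert by analogy with Theorem~\ref{Thm: mono_lemma} that $x=1/\sqrt{2}$ is the \emph{unique} interior critical point of $m_{\alpha}$ on the arc, but the paper only claims this for $\alpha\geq2$, so it must be re-verified for $\alpha<2$ before the single-point check suffices; (ii) for $\alpha<1$ the sign of $1-\alpha$ flips, so the rearrangement of $f_{\alpha}(1/\sqrt{2})\geq1/2$ into the displayed inequality reverses direction and the endpoint $\alpha$ near $0$ needs separate treatment. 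In summary: a useful and partially novel reduction that pins down the upper limit $1.43$, but with the same essential gap that keeps the paper's statement a conjecture.
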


Now, Conjectures~\ref{Con: alphaless1} and~\ref{Con: hpositive}
lead us to the following theorem, which is the last main result of
this paper.

\begin{Thm}
For a multi-qubit pure state $\ket{\psi}_{A_1 \cdots A_n}$ and any
real $\alpha$ such that $0.83-\epsilon \leq \alpha \leq
1.43+\epsilon$ with $0<\epsilon<0.01$, we have
\begin{equation}
E_{\alpha}\left( \ket{\psi}_{A_1(A_2 \cdots A_n)}\right)\leq  E_{\alpha}^a(\rho_{A_1 A_2}) +\cdots+E_{\alpha}^a(\rho_{A_1
A_n}),
\label{nRdual}
\end{equation}
where $E_{\alpha}\left( \ket{\psi}_{A_1(A_2 \cdots
A_n)}\right)=S_{\alpha}\left(\rho_A \right)$ is the R\'enyi-$\alpha$
entanglement of $\ket{\psi}_{A_1\left(A_2 \cdots A_n\right)}$ with
respect to the bipartite cut $A_1$ and $A_{2}\cdots A_{n}$, and
$E_{\alpha}^a(\rho_{A_1 A_i})$ is the REoA of the reduced density
matrix $\rho_{A_1 A_i}$ for $i=2,\cdots,n$. \label{Rpoly}
\end{Thm}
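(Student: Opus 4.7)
The plan is to chain three ingredients: the pure-state formula for R\'enyi-$\alpha$ entanglement on bipartitions of Schmidt rank at most two (Eq.~(\ref{relation_pure})), the polygamy inequality for concurrence in multi-qubit systems (Eq.~(\ref{nCdual})), and Conjecture~\ref{Con: hpositive} together with the two-qubit bound $f_{\alpha}(\mathcal{C}^a(\rho_{AB}))\leq E_{\alpha}^a(\rho_{AB})$ from Eq.~(\ref{f_Ea}), whose derivation rests on the convexity of $f_{\alpha}$ asserted in Conjecture~\ref{Con: alphaless1}. Note that the prescribed interval $0.83-\epsilon\leq\alpha\leq 1.43+\epsilon$ is exactly the intersection of the validity ranges of the two conjectures, so both tools are available simultaneously.

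First, because $\ket{\psi}_{A_1\cdots A_n}$ is pure, the bipartition isolating $A_1$ has Schmidt rank at most $2=\dim A_1$, so Eq.~(\ref{relation_pure}) gives the identity
\begin{equation}
E_{\alpha}\left(\ket{\psi}_{A_1(A_2\cdots A_n)}\right)=f_{\alpha}\left(\mathcal{C}_{A_1(A_2\cdots A_n)}\right).
\end{equation}
Next, I would apply Eq.~(\ref{f_Ea}) to each two-qubit reduced state $\rho_{A_1 A_i}$ to obtain $f_{\alpha}(\mathcal{C}^a_{A_1 A_i})\leq E_{\alpha}^a(\rho_{A_1 A_i})$ for every $i=2,\ldots,n$. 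It then suffices to establish the middle inequality
\begin{equation}
f_{\alpha}\left(\mathcal{C}_{A_1(A_2\cdots A_n)}\right)\leq\sum_{i=2}^n f_{\alpha}\left(\mathcal{C}^a_{A_1 A_i}\right).
\label{middleineq}
\end{equation}

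To prove Eq.~(\ref{middleineq}), write $c=\mathcal{C}_{A_1(A_2\cdots A_n)}$ and $x_i=\mathcal{C}^a_{A_1 A_i}$. The concurrence polygamy bound (Eq.~(\ref{nCdual})) gives $c^2\leq\sum_i x_i^2$. The main obstacle I anticipate is that the sum $\sum_i x_i^2$ can exceed $1$, so $\sqrt{\sum_i x_i^2}$ may lie outside the domain of $f_{\alpha}$ and Conjecture~\ref{Con: hpositive} cannot be applied directly to the right-hand side. The remedy is a uniform truncation: set $y_i=r\,x_i$ with $r=c/\sqrt{\sum_i x_i^2}\leq 1$, so that $y_i\leq x_i\leq 1$ for each $i$ and $\sum_i y_i^2=c^2\leq 1$. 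Every partial sum $y_k^2+\cdots+y_n^2$ then lies in $[0,1]$ as well, so Conjecture~\ref{Con: hpositive} can be iterated along these partial sums to yield
\begin{equation}
f_{\alpha}(c)=f_{\alpha}\left(\sqrt{\sum_{i=2}^n y_i^2}\right)\leq\sum_{i=2}^n f_{\alpha}(y_i)\leq\sum_{i=2}^n f_{\alpha}(x_i),
\end{equation}
where the final step uses monotonicity of $f_{\alpha}$, valid on the present range of $\alpha$ by Corollary~\ref{Cor: real_alpha} and the remark following Theorem~\ref{Thm: int_alpha} that $f_{\alpha}$ is monotone increasing on $[0,1]$ for every $\alpha>0$. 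Concatenating this with the pure-state identity and the bound from Eq.~(\ref{f_Ea}) produces Eq.~(\ref{nRdual}) and completes the proof.
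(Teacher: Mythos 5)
Your proposal is correct, and it follows the paper's overall skeleton (pure-state formula $E_{\alpha}=f_{\alpha}(\mathcal{C})$ for the $2\otimes d$ cut, the concurrence polygamy bound of Eq.~(\ref{nCdual}), iterated use of Conjecture~\ref{Con: hpositive}, and $f_{\alpha}(\mathcal{C}^a)\leq E^a_{\alpha}$ from Eq.~(\ref{f_Ea})), but it resolves the one genuine technical obstacle --- the possibility that $\sum_i(\mathcal{C}^a_{A_1A_i})^2>1$ puts the argument of $f_{\alpha}$ outside $[0,1]$ --- by a different device. The paper splits into two cases: when the sum of squares is at most $1$ it iterates Eq.~(\ref{fpoly}) directly on the $\mathcal{C}^a_{A_1A_i}$, and when it exceeds $1$ it instead proves $\sum_i E^a_{\alpha}(\rho_{A_1A_i})\geq 1\geq S_{\alpha}(\rho_{A_1})$ by locating the index $k$ at which the partial sums cross $1$ and subtracting a single correction term $T$ from the $(k+1)$-th summand so that $f_{\alpha}(1)=1$ can be expanded. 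Your uniform rescaling $y_i=r\,x_i$ with $r=c/\sqrt{\sum_i x_i^2}$ keeps every partial sum inside the domain from the outset, handles both cases in one stroke, and then discharges the rescaling via monotonicity $f_{\alpha}(y_i)\leq f_{\alpha}(x_i)$; it also avoids invoking the separate bound $S_{\alpha}(\rho_{A_1})\leq 1$. The only degenerate point is $\sum_i x_i^2=0$, where $r$ is undefined, but there $c=0$ and both sides vanish, so the inequality is trivial. Both arguments rest on exactly the same two conjectures over the same range of $\alpha$, so your proof is no more and no less conditional than the paper's; yours is arguably the cleaner packaging of the case analysis.
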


\begin{proof}
The proof method follows the construction used in~\cite{bgk}. From
the polygamy inequality of entanglement in multi-qubit systems in
Eq.~(\ref{nCdual}) gives us an inequality
\begin{equation}
\mathcal{C}_{A_1 (A_2 \cdots A_n)}  \leq \sqrt{
(\mathcal{C}^a_{A_1 A_2})^2 +\cdots+(\mathcal{C}^a_{A_1 A_n})^2 }.
\label{nCdual2}
\end{equation}

First, let us assume that $ (\mathcal{C}^a_{A_1 A_2})^2
+\cdots+(\mathcal{C}^a_{A_1 A_n})^2 \leq 1$, then we have
\begin{eqnarray}
E_{\alpha}\left( \ket{\psi}_{A_1(A_2 \cdots A_n)}\right)
&=&f_{\alpha}(\mathcal{C}_{A_1 (A_2 \cdots A_n)})\nonumber\\
&\leq &f_{\alpha}\left(\sqrt{
(\mathcal{C}^a_{A_1 A_2})^2 +\cdots+(\mathcal{C}^a_{A_1 A_n})^2 }\right)\nonumber\\
&\leq &f_{\alpha}\left( \mathcal{C}^a_{A_1 A_2}\right) +
f_{\alpha}\left(\sqrt{(\mathcal{C}^a_{A_1 A_3})^2
+\cdots+(\mathcal{C}^a_{A_1 A_n})^2}\right)\nonumber\\
&\leq &f_{\alpha}\left( \mathcal{C}^a_{A_1 A_2}\right)+
f_{\alpha}\left(\mathcal{C}^a_{A_1 A_3}\right) +\cdots + f_{\alpha}\left(
\mathcal{C}^a_{A_1 A_n}\right)\nonumber\\
&\leq& E_{\alpha}^a\left(\rho_{A_1 A_2}\right)
+\cdots+E_{\alpha}^a\left(\rho_{A_1 A_n}\right),
\end{eqnarray}
where the first equality is by the functional relation between the
concurrence and the R\'enyi-$\alpha$ entanglement for $2\otimes d$
pure states, the first inequality is due to the monotonicity of the
function $f_{\alpha}(x)$, the second, third and forth inequalities
are obtained by iterative use of Eq.~(\ref{fpoly}) in
Conjecture~\ref{Con: hpositive}, and the last inequality is by
Eq.~(\ref{f_Ea}).

Now, assume that $ (\mathcal{C}^a_{A_1 A_2})^2
+\cdots+(\mathcal{C}^a_{A_1 A_n})^2 > 1$. Since
$E_{\alpha}\left( \ket{\psi}_{A_1(A_2 \cdots A_n)}\right)=S_{\alpha}(\rho_{A_1})\leq
1$ for any multi-qubit pure state $\ket{\psi}_{A_1(A_2 \cdots A_n)}$, it
is enough to show that $E_{\alpha}^a(\rho_{A_1 A_2}) +\cdots+E_{\alpha}^a(\rho_{A_1
A_n}) \geq 1$. Here, we note that there exist $k \in \{2,\ldots ,n-1 \}$ that
satisfies
\begin{equation}
(\mathcal{C}^a_{A_1 A_2})^2 +\cdots+(\mathcal{C}^a_{A_1 A_k})^2
\leq 1,~ (\mathcal{C}^a_{A_1 A_2})^2 +\cdots+(\mathcal{C}^a_{A_1
A_{k+1}})^2 >1.
\end{equation}
By letting
\begin{equation}
T:=(\mathcal{C}^a_{A_1 A_2})^2 +\cdots+(\mathcal{C}^a_{A_1
A_{k+1}})^2-1,
\end{equation}
we have
\begin{eqnarray}
1 &=& f_{\alpha} \left( \sqrt{(\mathcal{C}^a_{A_1 A_2})^2
+\cdots+(\mathcal{C}^a_{A_1
A_{k+1}})^2-T} \right)\nonumber\\
&\leq& f_{\alpha} \left( \sqrt{(\mathcal{C}^a_{A_1 A_2})^2
+\cdots+(\mathcal{C}^a_{A_1 A_k})^2} \right)
+f_{\alpha} \left( \sqrt{(\mathcal{C}^a_{A_1 A_{k+1}})^2-T} \right)\nonumber\\
&\leq&f_{\alpha} \left( \mathcal{C}^a_{A_1 A_2}\right)+\cdots
+f_{\alpha} \left( \mathcal{C}^a_{A_1 A_k}\right)+
f_{\alpha} ( \mathcal{C}^a_{A_1 A_{k+1}})\nonumber\\
&\leq& E_{\alpha}^a(\rho_{A_1 A_2})+\cdots + E_{\alpha}^a(\rho_{A_1 A_n}),
\end{eqnarray}
where the first inequality is by using Eq.~(\ref{fpoly}) with respect to
$(\mathcal{C}^a_{A_1 A_2})^2+\cdots+(\mathcal{C}^a_{A_1 A_k})^2$ and
$(\mathcal{C}^a_{A_1 A_{k+1}})^2-T$,
the second inequality is by iterative use of Eq.~(\ref{fpoly}) on
$(\mathcal{C}^a_{A_1 A_2})^2+\cdots+(\mathcal{C}^a_{A_1 A_k})^2$, and
the last inequality is by Eq.~(\ref{f_Ea}).
\end{proof}

\section{Conclusion}
\label{Conclusion}

By using R\'enyi-$\alpha$ entropy, we have established a class of
monogamy and polygamy inequalities of multi-qubit entanglement. We
have shown that monogamy of multi-qubit entanglement can have
CKW-type characterization in terms of R\'enyi-$\alpha$ entanglement
for $\alpha \geq2$, and conjectured the possible region of $\alpha$
for polygamy inequality in terms of REoA. Although the specific
region of $\alpha$ for polygamy inequality is supported by numerical
evidences, the existence of such region is based on the continuity
of R\'enyi-$\alpha$ entropy, which is analytically provable. Thus,
the conjecture we claimed here is highly reasonable because it has
both numerical and analytical reasons.

Multipartite entanglement is known to have many inequivalent
classes, which are not convertible to each other under {\em
Stochastic Local operations and classical communications}
(SLOCC)~\cite{DVC}. Furthermore, the number of inequivalent classes
increases dramatically as the number of parties increase~\cite{os}.
Not like bipartite entanglement, the existence of inequivalent
classes of multipartite entanglement implies that the states from
different classes are hardly comparable to each other in such a way
of comparing a single parameter that quantifies their entanglement.
This is one of the main difficulties in the study of multipartite
entanglement.

Whereas the interconvertibility of quantum states under SLOCC gives
us an operational way to classify multipartite entanglement,
entangled states from different classes can also reveal different
characters with respect to their monogamy and polygamy properties.
For example, three-qubit systems are known to have two inequivalent
classes of genuine three-qubit entanglement, the
Greenberger-Horne-Zeilinger (GHZ) class~\cite{GHZ} and the
W-class~\cite{DVC}. In terms of monogamy and polygamy relations, CKW
and its dual inequalities are saturated by W-class states, while the
differences between terms in the inequalities can assume their
largest values for GHZ-class states. In other words, monogamy and
polygamy of multipartite entanglement can also be used for an
analytical characterization of entanglement in multipartite quantum
systems.

The class of monogamy and polygamy inequalities of multi-qubit
entanglement we provided here consists of infinitely many
inequalities parameterized by $\alpha$, and each inequality is based
on the distinct character of R\`enyi-$\alpha$ entropy with respect
to $\alpha$. We believe that this selective choice of our monogamy
and polygamy inequalities will leads us to an efficient way of
analytic classification of multi-qubit entanglement. Moreover, our
result will also provide useful tools and strong candidates for
general monogamy and polygamy relations of entanglement in
multipartite higher-dimensional quantum systems, which is one of the
most important and necessary topics in the study of multipartite
quantum entanglement.

\section*{Acknowledgments}
This work was supported by {\it i}CORE, MITACS, and USARO. BSC is a CIFAR Associate.

\section*{References}

\end{document}